\setlist{nosep}
\newtheorem{algorithm}{Algorithm}
\newtheorem{theorem}{Theorem}
\newtheorem{definition}{Definition}
\newtheorem{corollary}{Corollary}
\newtheorem{lemma}{Lemma}
\definecolor{darkgreen}{rgb}{0.0,0.5,0.0}
\newcommand{\arya}[1]{{\color{darkgreen} #1}}
\renewcommand{\dag}{DAG\xspace}
\newcommand{\dags}{DAGs\xspace}
\newcommand{\dc}{DC\xspace}
\newcommand{\mds}{MDS\xspace}
\newcommand{\mm}{MM\xspace}
\newcommand{\imped}{impedensable\xspace}
\newcommand{\Imped}{Impedensable\xspace}
\newcommand{\ldag}{\ensuremath{\prec}-\dag\xspace}
\title{DAG-Inducing Problems and Algorithms}
\author{Arya Tanmay Gupta\and Sandeep S Kulkarni}
\date{\texttt{\{atgupta,sandeep\}@msu.edu}}
\begin{document}

\maketitle

\begin{abstract}

Consider the execution of a sequential algorithm that requires the program to converge to an optimal state, and then terminate/stutter. To design such an algorithm, we need to ensure that the state space that it traverses forms a directed acyclic graph (DAG) and its sink nodes are optimal states. However, if we run the same algorithm on multiple computing nodes running in parallel, and without synchronization, it may not reach an optimal state.
In most parallel processing algorithms designed in the literature, a synchronization primitive is assumed. Synchronization ensures that the nodes read fresh value, and the execution proceeds systematically, such that the subject algorithm traverses a DAG induced among the global states.

With this observation, we investigate the conditions that guarantee that the execution of an algorithm is correct even if it is executed in parallel and without synchronization. To this end, 
we introduce DAG-inducing problems and DAG-inducing algorithms. We show that induction of a $\prec$-DAG (induced among the global states
-- that forms as a result of a partial order induced among the local states visited by individual nodes)
is a necessary and sufficient condition to allow an algorithm to run in asynchrony.

In the paper, we first give a comprehensive description of DAG-inducing problems and DAG-inducing algorithms, along with some simple examples. Then we show some properties of an algorithm that is tolerant to asynchrony, which include the above-mentioned condition.
\end{abstract}

\textbf{\textit{Keywords}}: DAG-inducing problems, DAG-inducing algorithms, asynchrony, dominant clique, shortest path, maximal
matching

\section{Introduction}

Consider the minimal dominating set (\mds) problem. An algorithm for this problem can be developed as follows: if node $i$ can leave the dominating set while ensuring that $i$ and its neighbours stay dominated, then $i$ moves out. Similarly, if $i$ or one of the neighbours of $i$ are not dominated, then $i$ turns itself in.
If this algorithm is run in an interleaving fashion (where one node executes at a time), then it will converge to a state where an \mds is formed. 

In a uniprocessor execution (or a multiprocessor execution where only one node executes at a time), e.g., 
this algorithm runs correctly because the global states form a directed acyclic graph (\dag), and all the sink nodes of this \dag are optimal states.
This property is also necessary and sufficient for the correctness of an 
algorithm that must reach an optimal state and then terminate/stutter.
However, if the above algorithm is allowed to run on a multiprocessor system in asynchrony, then it may never converge. This happens due to the race conditions that may arise among neighbouring nodes.



Synchronization ensures that the nodes read fresh value, and the execution proceeds systematically, such that the subject algorithm traverses a DAG induced among the global states. In some of the synchronization models, a selected process acts as a scheduler for the rest of the processes. A \textit{scheduler/daemon} is a node whose function is to choose one, some, or all nodes in a time step, throughout the execution, so that the selected nodes can evaluate their guards and take the corresponding action. A \textit{central scheduler} chooses one arbitrary node per time step. A \textit{distributed scheduler} chooses one or more arbitrary nodes per time step. A \textit{synchronous scheduler} chooses all the nodes in each time step.

If in a multiprocessor system, the processors can run one or more processes simultaneously, then the processes need to coordinate with each other in order to ensure that they read the information from each other in a consistent fashion. Such coordination is enforced using synchronization primitives. 

If an algorithm can run without synchronization and still converge to an optimal state, then it can highly benefit from the available concurrency permitting each node to execute at its own pace. 
While the existence of a \dag was necessary and sufficient for correctness of a sequential program, we do not know a similar result for asynchronous execution of a concurrent program. With this observation, we focus on the following question:

\begin{quote}
    What are the properties of an algorithm, that are necessary and sufficient, and allow it to converge in asynchrony?
\end{quote}

We show that the answer to this question is affirmative. 
Towards this end, we introduce \textit{DAG-inducing problems} and \textit{DAG-inducing algorithms}. In such systems, the global states form a $\prec$-DAG 
where the local states visited by individual nodes form a partial order.
Some additional conditions also need to be satisfied to ensure convergence, which we discuss in this paper.
We show that such \dag-induction is necessary and sufficient to allow asynchrony, i.e., any algorithm, that converges in asynchrony, is \dag-inducing, and vice versa.

\subsection{Contributions of the paper}




\begin{itemize}
    \item We introduce the classes of \dag-inducing problems and \dag-inducing algorithms.
    \item We show that the dominant clique (DC) problem and the shortest path (SP) problem are \dag-inducing problems. Among these, DC allows self-stabilization, whereas the algorithm that we present for the SP does not.
    \item We demonstrate that maximal matching problem is not a \dag-inducing problem. We present a \dag-inducing algorithm for it. This algorithm allows self-stabilization.
    \item We study the upper bound to the convergence time of an algorithm traversing a DAG of states.
    \item We show how inducing a $\prec$-DAG (described in \Cref{section:dip}) in the state space is crucial to allow asynchrony: it is a necessary and sufficient condition to allow asynchrony.
    \item It follows from our theury that we can verify that an algorithm is capable of executing in asynchrony iff the local states visited by individual nodes form a partial order. We do not have to traverse the entire global state space to check for cycles. This is fruitful in writing social and formal proofs that show tolerance of an algorithm to asynchrony.
\end{itemize}


\subsection{Organization of the paper}

We discuss the preliminaries in \Cref{section:preliminaries}. In \Cref{section:dip}, we study the characteristics of DAG-inducing problems, and
in \Cref{section:dia}, we study the characteristics of DAG-inducing algorithms, with examples.
In \Cref{section:dag-properties}, we study the properties of \dag-inducing algorithms. While the previous sections provide simple, but sufficient, examples of asynchrony tolerant systems, this section is crucial from the perspective of the theory that we establish in this paper.
In \Cref{section:implications}, we discuss the implications of the theory that we present in this paper.
We discuss the related work in \Cref{section:literature}. Finally, we conclude in \Cref{section:conclusion}. 

\section{Preliminaries}\label{section:preliminaries}

The example algorithms that we present in this paper are graph algorithms where the input is a graph $G$, $V(G)$ is the set of its nodes, $E(G)$ is the set of its edges, $n=|V(G)|$, and $m=|E(G)|$. For a node $i\in V(G)$, $Adj_i$ is the set of nodes adjacent to $i$, and $Adj^x_i$ is the set of nodes within distance-$x$ from $i$, excluding $i$.
$deg(i)=|Adj_i|$. $dis(i,j)$ is the length of shortest path from $i$ to node $j$. For a finite natural number $x$, $[1:x]$ denotes a sequence of all natural numbers from 1 to $x$.

Each node $i$ is associated with a set of variables. The algorithms are written in terms of rules, where each \textit{rule} for every node $i$ is of the form $g\longrightarrow a_c$, where the \textit{guard} $g$ is a proposition consisting of the variables of $i$ along with the variables of other nodes.
If at least one of the guards $g$ hold true for $i$, we say that $i$ is \textit{enabled}.
We say that $i$ makes a \textit{move} when $i$ is enabled and updates its variables by executing \textit{action} $a_c$ corresponding to the guard $g$ that holds true.
A \textit{round} is a sequence of events in which every node evaluates its guards at least once, and makes a move if it is enabled.

We denote $S$ to be the set of all global states.  A global state $s\in S$ is represented as a vector where $s[i]$ denotes the local state of node $i$. $s[i]$
itself is
a vector of the variables of $i$.
The \textit{state transition system} $\mathcal{S}$ on the state space $S$ is a discrete structure that defines all the possible transitions that can take place among the states of $S$. Under a given algorithm $A$, $\mathcal{S}$ is a directed graph such that $V(\mathcal{S})=S$, and $E(\mathcal{S})=\{\langle s, s'\rangle | \langle s, s'\rangle$ is a state transition under $A\}$.

An algorithm $A$ is \textit{self-stabilizing} with respect to the subset $S_o$ of $S$ iff (1) \textit{convergence}: starting from an  arbitrary state, any sequence of computations of $A$ reaches a state in $S_o$, and (2) \textit{closure}: any computation of $A$ starting from $S_o$ always stays in $S_o$. 
We assume $S_o$ to be the set of \textit{optimal} states: the system is deemed converged once it reaches a state in $S_o$. $A$ is a \textit{silent} self-stabilizing algorithm if no node becomes enabled once a state in $S_o$ is reached.




\subsection{Execution without Synchronization}

Typically, we view the \textit{computation} of an algorithm as a sequence of global states $\langle s_0, s_1, \cdots\rangle$, where $s_{\ell+1}, \ell\geq 0,$ is obtained by executing some action by one or more nodes (as decided by the scheduler) in $s_\ell$.  
For the sake of discussion, assume that only node $i$ executes in state $s_\ell$. 
The computation prefix uptil $s_{\ell}$ is $\langle s_0, s_1, \cdots, s_\ell\rangle$. The state that the system traverses to after $s_\ell$ is $s_{\ell+1}$.
Under proper synchronization, $i$ would evaluate its guards on the \textit{current} local states of its neighbours in $s_\ell$, resulting in the system reaching $s_{\ell+1}$.

To understand the execution works in asynchrony, let $x(s)$ be the value of some variable $x$ 
in state $s$. 
If $i$ executes in asynchrony, then it views the global state that it is in to be $s'$, 
where $x(s')\in\{ x(s_0), x(s_1), \cdots, x(s_\ell) \}.$
In this case, $s_{\ell+1}$ is evaluated as follows.
If all guards in $i$ evaluate to false, then the system will continue to remain in state $s_\ell$, i.e., $s_{\ell+1} = s_{\ell}$.
If a guard $g$ evaluates to true then $i$ will execute its corresponding action $a_c$.
Here, we have the following observations:
(1) $s_{\ell+1}[i]$ is the state that $i$ obtains after executing an action in $s'$, and (2) $\forall j\neq i$, $s_{\ell+1}[j] = s_\ell[j]$.


In this paper, we are interested in two models: arbitrary asynchrony (AA) and asynchrony with monotonous read (AMR). In the AA model, as described above, a node can read old values of other nodes arbitrarily, requiring that if some information is sent from a node, it eventually reaches the target node.
Similar to AA, in AMR, the nodes execute asynchronously and can read the old values of other nodes. 
However, the AMR model adds another restriction; the values of variables of other nodes are read/received in the order in which they were updated/sent.
In both these models, node $i$ reads the most recent state of itself. 

\section{Natural \dag induction: \dag-inducing \textit{Problems}}
\label{section:dip}

In this section, we discuss properties of problems where a \dag can be 
induced \textit{naturally}.
It means that in any suboptimal state, the problem definition itself can be used to specify the nodes that must change their state, in order for the system to reach an optimal state. The problems/algorithms considered in this paper rely on the requirement that execution of such node(s) is critical for reaching an optimal state.
We first elaborate on, in the following,
the discrete structure that, if induced among the global states, allows a program to run in asynchrony (under some additional constraints).


\subsection{Embedding a \ldag among global states}\label{subsection:<-dag}

To explain the embedding of a $\prec$-DAG, 
%
first, we define a partial order $\prec_l$ among the local states of a node.
This partial order 
defines
all the possible transitions that a node is allowed to take.
$\prec_l$ is used to restrict how node $i$ can execute: $i$ can go from state $s[i]$ to $s'[i]$ only if $s[i] \prec_l s'[i]$.





Using $\prec_l$, we define a partial order $\prec_g$ among global states as follows.
We say that $s \prec_g s^\prime$ iff $(\forall i: s[i]=s'[i]\lor s[i]\prec_l s'[i]) \land (\exists i:s[i]\prec_ls'[i])$.
$s=s'$ iff $\forall i:s[i] = s'[i]$. 
For brevity, we use $\prec$ to denote $\prec_l$ and $\prec_g$: $\prec$ corresponds to $\prec_l$ while comparing local states, and $\prec$ corresponds to $\prec_g$ while comparing global states. 
We also use the symbol `$\succ$' which is such that $s\succ s'$ iff $s' \prec s$.
Similarly, we use symbols `$\preceq$' and `$\succeq$'; e.g., $s\preceq s'$ iff  $s=s' \lor s \prec s'$.
We call the \dag, formed from such partial order, a \textit{$\prec$-\dag}.

\begin{definition}\label{definition:<-dag}
    \textbf{\boldmath $\prec$-\dag}. 
    Given a partial order $\prec_l$ that orders the local states visited by $i$ (for each $i$), the $\prec$-\dag corresponding to $\prec_l$ is defined by the following partial order:
    $s \prec s'$ iff $(\forall i: s[i] \preceq_l s'[i]) \wedge (\exists i: s[i] \prec_l s'[i])$.
\end{definition}

A $\prec$-\dag constraints how global states can transition among one another: state $s$ can transition to state $s'$ iff $s\prec s'$.
By varying $\prec_l$ that identifies a partial order among the local states of a node, one can obtain different \dag{s}. A $\prec$-\dag, embedded in the state space, is useful for permitting the algorithm to execute asynchronously.
Under proper constraints on the structure of $\prec$-\dag, convergence can be ensured. We elaborate on this in \Cref{subsection:properties-dip}. 

\noindent\textbf{\textit{Remark}}: If local states were totally ordered, the global states would form a $\prec$-lattice.
However, the discrete structure formed among the global states cannot be restricted to form a $\prec$-lattice if local states are allowed to form a partial order. Hence, the resultant structure may not be a lattice; $\prec$-lattices are studied in \cite{Garg2020,Garg2021,Garg2022,Gupta2021,Gupta2022,Gupta2023,Gupta2023a}.

\subsection{General Properties of \dag-Inducing Problems}\label{subsection:properties-dip}


A \textit{\dag-inducing problem}
$P$ can be represented by a predicate $\mathcal{P}$, where $P$ stipulates that for any node $i$, if it is violating $\mathcal{P}$ in some state $s$, then $i$ must change its state, otherwise the system will not satisfy $\mathcal{P}$.
In addition, any local state $st$ of $i$, that is deemed in violation by $i$, will make any global state $s$ suboptimal if $s[i]=st$. So $i$ never revisits $st$.
As a result, the local states that $i$ visits form a partial order (no cycles).

Due to this partial order, in \dag-inducing problems, if $\mathcal{P}(s)$ is false then node $i$ does not revisit $st$ and, it also does not visit some other states. To explain this, let $st_{init}$ be the initial state of node $i$.
For a given state $st$ of node $i$, we define $\textsc{Same-Level}(st,i)$ to be the local states of $i$ that are at the same distance from $st_{init}$, as compared to $st$.
The partial order among the local states of node $i$ is induced by the requiring that if $st$ violates $\mathcal{P}$, then all the states in $\textsc{Same-Level}(st,i)$ also violate $\mathcal{P}$. This ensures that $i$ must move up in the partial order. A node violating $\mathcal{P}$ in $s$ is called an \textit{\imped} node (an \textit{impediment} to progress if does not execute, \textit{indispensable} to execute for progress).


\begin{definition}\label{definition:impedensable-node}\textbf{Impedensable node.} $\textsc{\Imped}(i,s,\mathcal{P})\equiv \lnot \mathcal{P}(s)\land(\forall s':(s'\succeq s)\Rightarrow(s'[i]\in \textsc{Same-Level}(s[i],i)\Rightarrow\lnot \mathcal{P}(s')))$.
\end{definition}

If a node $i$ is \imped in global state $s$, then in any global state $s'$ such that $s'\succeq s$, if the local state of $i$ remains the same, then the system does not converge.
Thus, predicate $\mathcal{P}$ induces a partial order among the local states visited by a node, for all nodes. Consequently, the discrete structure $\mathcal{S}$ that gets induced among the global states is a $\prec$-\dag, as described in \Cref{definition:<-dag}. 
We say that $\mathcal{P}$, satisfying \Cref{definition:impedensable-node}, is \textit{\dag-inducing} with respect to that $\prec$-\dag.

Multiple \dags can be induced among the global states. A system cannot guarantee convergence while traversing an arbitrary \dag. We design the predicate $\mathcal{P}$ such that it fulfils some properties, and guarantees convergence to an optimal state. $\mathcal{P}$ is used by the nodes to determine if they are \imped, using \Cref{definition:impedensable-node}.
Thus, in any suboptimal global state, there will be at least one \imped node. 

\begin{definition}\textbf{\dag-Inducing Predicate.}
    $\mathcal{P}$ is a \dag-inducing predicate with respect to a $\prec$-\dag induced among the global states iff $\forall s\in S: \lnot\mathcal{P}(s) \Rightarrow \exists i:\textsc{\Imped}(i,s,\mathcal{P})$.
\end{definition}

Now we complete the definition of \dag-inducing problems. In a \dag-inducing problem $P$, given any suboptimal global state, we can identify all nodes that should not retain their state.
$\mathcal{P}$ is thus designed conserving this nature of problem $P$.

\begin{definition}\label{definition:dip}
\textbf{\dag-inducing problem (DIP)}.
A problem $P$ is \dag-inducing
iff there exists a predicate $\mathcal{P}$ and a $\prec$-\dag $\mathcal{S}$, induced among the global states, such that

\begin{itemize}
    \item $P$ is solved iff the system reaches a state where $\mathcal{P}$ is true,
    \item $\mathcal{P}$ is \dag-inducing with respect to $\mathcal{S}$, i.e., $\forall s: \neg \mathcal{P}(s) \Rightarrow \exists i:\textsc{\Imped}(i,s,\mathcal{P})$,
    \item $\forall s:(\forall i:\textsc{\Imped}(i,s,\mathcal{P})\Rightarrow (\forall s':\mathcal{P}(s')\Rightarrow s'[i]\neq s[i]))$.
\end{itemize}
\end{definition}

\begin{definition}\textbf{\textit{Successors of a global state}}.
    A state $s'$ is a successor of a state $s$ iff $s'$ is reachable from $s$ in $\mathcal{S}$. Formally,
    $\textsc{Successors}(s)\equiv \{s':s'\succ s\}$.
\end{definition}


\begin{definition}
    $\textsc{Successor}(s,s')\equiv s'\in\textsc{Successors}(s)$.
\end{definition}

\begin{definition}\textbf{\textit{Terminal Successors}}.
    A state $s'$ is a terminal successor of $s$ iff $s'$ is the successor of $s$ and $s'$ has no successor. Formally,
    $\textsc{Terminal-Successors}(s)$ $\equiv$ $\{s'|$ $\textsc{Successor}(s$, $s') \land$ $\textsc{Successors}(s')$ $=$ $\phi\}$.
\end{definition}


\begin{definition}
    $\textsc{Terminal-Successor}(s,s')\equiv$\\ $s'$ $\in\textsc{Terminal-Successors}(s)$.
\end{definition}

$\mathcal{P}$ satisfies \Cref{definition:ss-dip} only if starting from any arbitrary state, the system converges to an optimal state. This, in turn, is possible only if all terminal successors in $\mathcal{S}$ are optimal states.
$\mathcal{P}$ can be true in other states as well. 

\begin{definition}\label{definition:ss-dip}\textbf{Self-stabilizing \dag-inducing predicate}.
    Continuing from \Cref{definition:dip}, $\mathcal{P}$ is a self-stabilizing \dag-inducing predicate if and only if all terminal successors in the \dag induced by $\mathcal{P}$ are optimal states, i.e. $\forall s,s'\in S: \textsc{Terminal-Successor}(s,s')\Rightarrow \mathcal{P}(s')=true$.
\end{definition}




Next, we study some \dag-inducing problems.

\subsection{Dominant Clique (\dc) problem}\label{subsection:dc}

\begin{definition}\textbf{Dominant clique}.
    In the dominant clique problem, the input is an arbitrary graph $G$ such that for the variable $i[cliq]$ of each node $i$, $i[cliq]\subseteq Adj_i\cup\{i\}$ and $\{i\}\subseteq i[cliq]$. The task is to (re-)evaluate $i[cliq]$ such that (1) all the nodes in $i[cliq]$ form a clique,
    and (2) there exists no clique $c$ in $G$ such that $i[cliq]$ is a proper subset of $c$.
\end{definition}

\noindent Thus, we define the DC problem by the following predicate.
\begin{center}
    $\mathcal{P}_{dc}\equiv (\forall j,k\in i[cliq]: (j\neq k)\Rightarrow(k\in Adj_j))\land(\not\exists j\in Adj_i: j\not\in i[cliq]\land (\forall k\in i[cliq]:k\in Adj_j))$
\end{center}
The local state of a node $i$ is defined by $\langle i[cliq]\rangle$. An \imped node $i$ in a state $s$ is a node for which (1) all the nodes in $i[cliq]$ do not form a clique, or otherwise (2) there exists some node $k$ in $Adj_i$ such that $i[cliq]\cup\{k\}$ is a valid clique, but $k$ is not in $i[cliq]$. Formally,
\begin{center}
$\textsc{\Imped-DC}(i)\equiv$ $\lnot(i\in i[cliq]\land(\forall j,k\in i[cliq]:j\neq k\land j\in Adj_k))$ $\lor$\\ $(\exists j\in Adj_i:j\not\in i[cliq]\land (\forall k\in i[cliq]:k\in Adj_j))$.\\
\end{center}
The algorithm is defined as follows. If all the nodes in $i[cliq]$ do not form a clique, then $i[cliq]$ is reset to be $\{i\}$. If there exists some node $j$ in $Adj_i$ such that $i[cliq]\cup\{j\}$ is a clique, but $j$ is not in $i[cliq]$, then $j$ is added to $i[cliq]$. 

\begin{algorithm}\label{algorithm:dc-dip}
    Rules for node $i$ in state $s$.
\end{algorithm}
\begin{center}
    $\begin{array}{|l|}
            \hline
            \textsc{\Imped-DC}(i)\longrightarrow\\
            \begin{cases}
                i[cliq]=\{i\} & \text{if $(i\not\in i[cliq]\lor(\exists j,k\in$}\\
                 & \text{\quad $i[cliq]:j\neq k\land j\not\in Adj_k))$}\\
                i[cliq]= i[cliq]~\cup\{j\} & \text{otherwise}\\
                 & \text{\quad ($j:\forall i\in i[cliq]: j\in Adj_i$)}
            \end{cases}~\\
            \hline
        \end{array}$
\end{center}

\begin{lemma}\label{lemma:dc-dip}
    The dominant Clique problem is a \dag-inducing problem.
\end{lemma}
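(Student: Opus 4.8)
The plan is to verify the three bullets of \Cref{definition:dip} for the predicate $\mathcal{P}_{dc}$, after first exhibiting the partial order $\prec_l$ that induces the $\prec$-\dag. For the local order I would call a local state of $i$ \emph{good} if $i[cliq]$ is a clique of $G$ with $i\in i[cliq]$, and order the good states by containment ($c\prec_l c'$ iff $c\subsetneq c'$), placing every remaining (\emph{bad}) state strictly below $\{i\}$ so that $\{i\}$ is the least good state. Antisymmetry and transitivity are inherited from $\subseteq$, so this is a genuine partial order and hence induces a $\prec$-\dag in the sense of \Cref{definition:<-dag}. I would then check that each rule of \Cref{algorithm:dc-dip} moves $i$ strictly upward in $\prec_l$: the addition rule sends $c$ to $c\cup\{j\}$, which is again a good state one level higher, and since $G$ is fixed a good state never becomes bad, so the reset rule fires at most once, carrying a bad state up to $\{i\}$. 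Taking the rank of a good state to be $|i[cliq]|$ (bad states ranked $0$) gives a rank function that strictly increases along $\prec_l$, matching the distance-from-$st_{init}$ notion behind $\textsc{Same-Level}$.

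The decisive structural fact I would isolate and reuse is \emph{locality}: both $\mathcal{P}_{dc}$ and $\textsc{\Imped-DC}(i)$ reference only $i[cliq]$, the adjacency sets, and adjacency within the fixed input graph $G$; they never inspect $j[cliq]$ for any other node $j$. Hence whether $i$ satisfies $\mathcal{P}_{dc}$ is a function of $i[cliq]$ and $G$ alone, invariant under changes to other nodes' states. The first bullet is then immediate, since $\mathcal{P}_{dc}$ asserts exactly that $i[cliq]$ is an inclusion-maximal clique through $i$ (a clique is maximal iff no single neighbour is addable), so the conjunction of $\mathcal{P}_{dc}$ over all nodes is equivalent to the \dc specification, and $P$ is solved iff $\mathcal{P}_{dc}$ holds everywhere.

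For the second bullet, suppose $\lnot\mathcal{P}_{dc}(s)$; then some node $i$ violates $\mathcal{P}_{dc}$, i.e. $\textsc{\Imped-DC}(i)$ holds in $s$. To upgrade this to $\textsc{\Imped}(i,s,\mathcal{P}_{dc})$ of \Cref{definition:impedensable-node}, take any $s'\succeq s$ with $s'[i]\in\textsc{Same-Level}(s[i],i)$. Because $s'\succeq s$ forces $s[i]\preceq_l s'[i]$, and two comparable states of equal rank must coincide, we obtain $s'[i]=s[i]$; by the locality fact $i$ still violates $\mathcal{P}_{dc}$ in $s'$, so $\lnot\mathcal{P}_{dc}(s')$, which is precisely what the definition demands. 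The third bullet follows from the same locality: if $i$ is \imped in $s$ then $i[cliq]$ is either not a clique through $i$ or is a properly extendable clique, and either defect depends only on $i[cliq]$ and $G$; so for any optimal $s'$ with $s'[i]=s[i]$ the defect persists, contradicting $\mathcal{P}_{dc}(s')$, whence $s'[i]\neq s[i]$.

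I expect the main obstacle to be the second bullet, specifically the reduction of the universal quantifier over all $s'\succeq s$ in \Cref{definition:impedensable-node} to the single case $s'[i]=s[i]$ via the rank argument on $\textsc{Same-Level}$; making this airtight requires the rank function to be strictly monotone along $\prec_l$ and the bad states to sit below $\{i\}$. Once that reduction is secured, the remainder is routine, as impedensability for \dc is purely local to $i[cliq]$ and the fixed graph $G$.
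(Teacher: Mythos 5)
Your proof is correct and rests on the same decisive fact as the paper's: $\mathcal{P}_{dc}$ and $\textsc{\Imped-DC}(i)$ are local to $i[cliq]$ and the fixed graph $G$, so a violating node remains violating in any $s'\succeq s$ in which its local state is unchanged --- this is exactly the step the paper dismisses as ``straightforward from the definition itself.'' The difference is one of completeness rather than of route. The paper's proof only establishes this persistence property (together with the characterization of suboptimal states), silently identifies the $\textsc{Same-Level}$ clause of \Cref{definition:impedensable-node} with the single case $s'[i]=s[i]$, and never exhibits the partial order $\prec_l$ nor checks the first and third bullets of \Cref{definition:dip}. You do all of this explicitly: the containment order on clique-states with bad states pinned below $\{i\}$, the rank function that justifies collapsing $\textsc{Same-Level}$ to equality for $\prec_l$-comparable states, the equivalence of $\mathcal{P}_{dc}$ with inclusion-maximality (using that any vertex extending a clique through $i$ must lie in $Adj_i$), and the third-bullet argument that an \imped node's defect persists into any optimal state with the same local state. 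What your version buys is a proof that actually discharges every clause of the definition of a \dag-inducing problem; what the paper's version buys is brevity, at the cost of leaving the underlying $\prec$-\dag and the $\textsc{Same-Level}$ reduction implicit.
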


\begin{proof}
    For a node $i$, $i[cliq]$ contains the nodes that $i$ is connected with, and the nodes in $i[cliq]$ should form a clique. A global state does not manifest a dominant clique if at least one node $i$ in $s$ does not store a set of nodes forming a maximal clique with itself, i.e. (1) $i[cliq]$ is not a maximal clique, that is, there exists a $j$ in $Adj_i\setminus i[cliq]$ such that $i[cliq]\cup\{j\}$ forms a valid clique, or (2) the nodes in $i[cliq]$ do not form a clique.
    
    Next, we need to show that if some node $i$ in state $s$ is violated, then for any global state $s'$ such that $s'\succeq s$, if $s'[i]=s[i]$, then $s'$ will not manifest a dominant clique. This is straightforward from the definition itself, that if a node $i$ is \imped, then $i$ does not store a set of nodes forming a maximal clique with itself. Thus, if $i$ is \imped in $s$, and $i$ has the same state in some in some $s'$ such that $s'\succ s$, then $s'$, as well, does not satisfy $\mathcal{P}_{dc}$.
\end{proof}

To present the abstraction of the partial order among the local states, we define the state value as follows.
\begin{center}
    $
    \begin{array}{l}
        \textsc{State-Value-DC}(i,s)=\\
            \begin{cases}
                |C|-|i[cliq]|:C = \text{the largest}\\ \text{\quad \quad superset of $i[cliq]$}\\ \text{\quad \quad that is a valid clique} & \text{if $i[cliq]$ is a clique}.\\
                deg(i)+1 & \text{otherwise}.
            \end{cases}
    \end{array}
    $
\end{center}
$\mathcal{P}$ induces a partial order among the local states, which can be abstracted by state value as defined above: for a pair of global states $s$ and $s'$, $s[i]\prec s'[i]$ iff $\textsc{State-Value-DS}(i,s')<\textsc{State-Value-DS}(i,s)$. As an instance, the partial order induced among the local states of node $v_1$ (of the graph in \Cref{figure:dc} (a)) is shown in \Cref{figure:dc} (b).

\begin{figure}[ht]
    \centering
    \subcaptionbox{}{
        \begin{tikzpicture}
            \node [circle, draw=black,fill=black,inner sep=2pt,label=above:$v_1$] (a) at (0,0) {};
            \node [circle, draw=black,fill=black,inner sep=2pt,label=left:$v_3$] (b) at (-.5,-1) {};
            \node [circle, draw=black,fill=black,inner sep=2pt,label=below:$v_2$] (c) at (.5,-1) {};
            
            \draw (a) -- (b); \draw (a) -- (c);
        \end{tikzpicture}
    }
    \subcaptionbox{}{
        \begin{tikzpicture}[scale=.8,every node/.style={scale=.8}]
            \node (a) at (0,0) {\begin{tabular}{c}\{$v_1$\}\end{tabular}};
            \node (b) at (-1,1) {\begin{tabular}{c}\{$v_1$,$v_2$\}\end{tabular}};
            \node (c) at (1,1) {\begin{tabular}{c}\{$v_1$,$v_3$\}\end{tabular}};
            
            \node (d) at (-3,-1) {\begin{tabular}{c}\{$v_1$,$v_2$,$v_3$\}\end{tabular}};
            \node (e) at (-1.25,-1) {\begin{tabular}{c}\{$v_2$,$v_3$\}\end{tabular}};
            \node (f) at (0,-1) {\begin{tabular}{c}\{\}\end{tabular}};
            \node (g) at (1,-1) {\begin{tabular}{c}\{$v_2$\}\end{tabular}};
            \node (h) at (2,-1) {\begin{tabular}{c}\{$v_3$\}\end{tabular}};
            \draw (a) -- (b); \draw (a) -- (c);\draw (a) -- (d); \draw (a) -- (e); \draw (a) -- (f); \draw (a) -- (g); \draw (a) -- (h);
            
            \node at (4.25,1) {\begin{tabular}{c}state value = 0\end{tabular}};
            \node at (4.25,0) {\begin{tabular}{c}state value = 1\end{tabular}};
            \node at (4.25,-1) {\begin{tabular}{c}state value = 3\end{tabular}};
        \end{tikzpicture}
    }
    \caption{(a) Input graph. (b) Partial order induced among the local states of node 1 (all edges are directed upwards).}
    \label{figure:dc}
\end{figure}
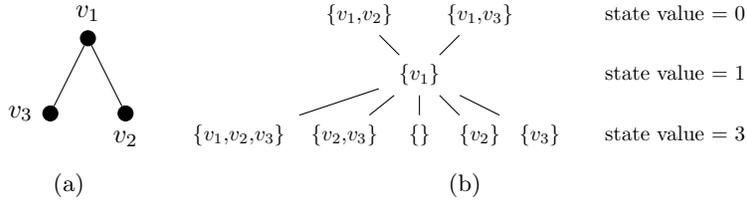


To present the abstraction of the \dag induced among the global states, we define the rank of a global state as follows.
\begin{center}
    $\textsc{Rank-DC}(s)=\sum\limits_{i\in V(G)}\textsc{State-Value-DC}(i,s).$
\end{center}

Under \Cref{algorithm:dc-dip}, the global states of the graph in \Cref{figure:dc} (a) form a \dag that we show in \Cref{figure:dc-global-states}. For a pair of global states $s$ and $s'$, $s\prec s'$ iff $\textsc{Rank-DC}(s')<\textsc{Rank-DC}(s)$.
In \Cref{figure:dc-global-states}, a global state is represented as $\langle\langle v_1[cliq]\rangle,\langle v_2[cliq]\rangle,\langle v_3[cliq]\rangle\rangle$.
The state space for this instance has a total 512 states. In the figure, we only show the states where the second guard is false in all the nodes. All the global states where the second guard is true in some nodes will converge to one of the states present in this figure, and then it will converge to one of the terminal successors.

\begin{figure}[ht]
    \centering
    \begin{tikzpicture}[scale=.9,every node/.style={scale=.9},y=.8cm]
            \node (a) at (0,-1) {\begin{tabular}{c}$\langle$\{1\},\{2\},\{3\}$\rangle$\end{tabular}};
            
            \node (b1) at (-4,.5) {\begin{tabular}{c}$\langle$\{1,2\},\{2\},\{3\}$\rangle$\end{tabular}};
            \node (b2) at (-1.25,.5) {\begin{tabular}{c}$\langle$\{1\},\{1,2\},\{3\}$\rangle$\end{tabular}};
            \node (b3) at (1.25,.5) {\begin{tabular}{c}$\langle$\{1\},\{2\},\{1,3\}$\rangle$\end{tabular}};
            \node (b4) at (4,.5) {\begin{tabular}{c}$\langle$\{1,3\},\{2\},\{3\}$\rangle$\end{tabular}};

            \node (c1) at (-3,2.5) {\begin{tabular}{c}$\langle$\{1,2\},\{1,2\},\{3\}$\rangle$\end{tabular}};
            \node (c2) at (0,2.5) {\begin{tabular}{c}$\langle$\{1,2\},\{2\},\{1,3\}$\rangle$\end{tabular}};
            \node (c3) at (3,2.5) {\begin{tabular}{c}$\langle$\{1\},\{1,2\},\{1,3\}$\rangle$\end{tabular}};

            \node (c6) at (-3,3.5) {\begin{tabular}{c}$\langle$\{1\},\{1,2\},\{1,3\}$\rangle$\end{tabular}};
            \node (c4) at (0,3.5) {\begin{tabular}{c}$\langle$\{1,3\},\{1,2\},\{3\}$\rangle$\end{tabular}};
            \node (c5) at (3,3.5) {\begin{tabular}{c}$\langle$\{1,3\},\{2\},\{1,3\}$\rangle$\end{tabular}};
            
            \node (d1) at (-1.5,5) {\begin{tabular}{c}$\langle$\{1,2\},\{1,2\},\{1,3\}$\rangle$\end{tabular}};
            \node (d2) at (1.5,5) {\begin{tabular}{c}$\langle$\{1,3\},\{1,2\},\{1,3\}$\rangle$\end{tabular}};
            
            \draw (a) -- (b1); \draw (a) -- (b2);\draw (a) -- (b3);\draw (a) -- (b4);
            
            \draw (b1) -- (c1);\draw (b2) -- (c1);
            \draw (b1) -- (c2);\draw (b3) -- (c2);
            \draw (b2) -- (c3);\draw (b3) -- (c3);

            \draw (b2) -- (c4);\draw (b4) -- (c4);
            \draw (b3) -- (c5);\draw (b4) -- (c5);
            \draw (b2) -- (c6);\draw (b3) -- (c6);
            
            \draw (c1) -- (d1);\draw (c2) -- (d1);\draw (c3) -- (d1);
            \draw (c4) -- (d2);\draw (c5) -- (d2);\draw (c6) -- (d2);
        \end{tikzpicture}
    \caption{DAG, assuming that initial state is $\langle\{1\},\{2\},\{3\}\rangle$; we replaced writing $v_i$ by $i$ for brevity. In all these states, the second guard of \Cref{algorithm:dc-dip} is false.
    Observe that any other state will converge to one of these states and then converge to one of the optimal states in this \dag. Transitive edges are not shown in this DAG. All edges are directed upwards.
    }
    \label{figure:dc-global-states}
\end{figure}
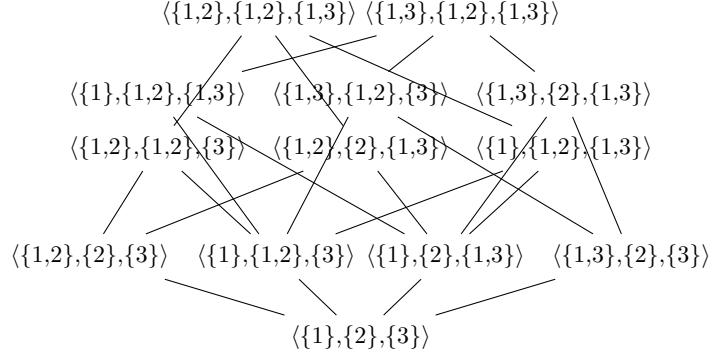



Notice that $\forall i\lnot \textsc{\Imped-DC}(i)$ is a self-stabilizing DAG-inducing predicate and satisfies \Cref{definition:ss-dip}; \Cref{algorithm:dc-dip} that utilizes this predicate is a self-stabilizing algorithm.

\begin{theorem}\label{theorem:dc-dip}
    \Cref{algorithm:dc-dip} is a silent self-stabilizing algorithm for the dominant clique problem on $n$ nodes executing asynchronously.
\end{theorem}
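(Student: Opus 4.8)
The plan is to prove three separate claims that together establish the theorem: (i) that \Cref{algorithm:dc-dip} is \emph{silent}, (ii) that it satisfies \emph{closure}, and (iii) that it satisfies \emph{convergence} from an arbitrary state, and moreover that all three hold when nodes execute asynchronously (in particular under the AA model). The backbone of the argument is the rank function $\textsc{Rank-DC}(s)=\sum_i\textsc{State-Value-DC}(i,s)$ already defined in the excerpt; I would use this as a monotone variant/potential function to drive a termination argument.

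First I would handle silence and closure together. By \Cref{lemma:dc-dip} the DC problem is \dag-inducing, so in any state $s$ a node $i$ is enabled (i.e.\ $\textsc{\Imped-DC}(i)$ holds) exactly when $i[cliq]$ fails to be a maximal clique containing $i$. The optimal set $S_o$ is precisely $\{s:\forall i\,\lnot\textsc{\Imped-DC}(i)\}=\{s:\mathcal{P}_{dc}(s)\}$. I would argue that once $s\in S_o$, no guard is enabled at any node, which gives silence immediately; and since no node moves, the state cannot leave $S_o$, which gives closure. A small subtlety to address here is that $\mathcal{P}_{dc}(s)$ as written constrains each node's local view, so I must verify that the conjunction of $\lnot\textsc{\Imped-DC}(i)$ over all $i$ indeed coincides with a genuine dominant clique assignment and that the guards are stable under asynchronous reads (a node reading a stale-but-correct neighbour value still sees no enabled guard).

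Next I would prove convergence. The key step is to show that every move made by an \imped node strictly decreases $\textsc{Rank-DC}(s)$, or more carefully, strictly decreases $\textsc{State-Value-DC}(i,s)$ for the executing node $i$ while leaving the other coordinates' values unchanged. For the reset action ($i[cliq]\gets\{i\}$) the value drops from $deg(i)+1$ to $|C|-1$ for the maximal clique $C$ containing $i$, hence strictly decreases; for the augmentation action ($i[cliq]\gets i[cliq]\cup\{j\}$) the value $|C|-|i[cliq]|$ decreases by one as the clique grows. Since $\textsc{State-Value-DC}$ is a non-negative integer, each node can move only finitely often, so the system reaches a state where no guard is enabled, i.e.\ a terminal successor, which by \Cref{definition:ss-dip} lies in $S_o$. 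I would then invoke the partial-order / $\prec$-\dag structure to lift this to the asynchronous setting: because the local states visited by $i$ form a partial order abstracted by the strictly decreasing $\textsc{State-Value-DC}$, a node acting on a stale global state $s'$ with $x(s')\in\{x(s_0),\dots,x(s_\ell)\}$ still moves $i$ strictly upward in $\prec_l$, so the rank is still a valid global variant and cannot cycle.

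The main obstacle I anticipate is the asynchrony argument, not the sequential termination count. Under AA a node $i$ may augment or reset $i[cliq]$ based on a neighbour value that is already out of date, and I must rule out the race conditions the introduction explicitly warns about — e.g.\ two neighbours each adding the other into their cliques on the basis of stale reads, producing a non-clique that then forces repeated resets. The crux is to show that a reset only ever \emph{lowers} a node's state value relative to where it started and that augmentations are bounded by $deg(i)+1$, so even interleaved stale reads give a well-founded decrease with no infinite descent; I would formalise this by checking that $\textsc{State-Value-DC}(i,\cdot)$ is bounded below by $0$ and that every enabled action on any (possibly stale) view strictly decreases it, which bounds the total number of moves by $\sum_i(deg(i)+1)=O(n+m)$ and thereby forces convergence regardless of scheduling or read staleness. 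Finally I would note the bound gives the silent self-stabilizing guarantee for the asynchronous execution, completing the proof.
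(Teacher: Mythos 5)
Your proposal is correct and follows essentially the same route as the paper's proof: both arguments hinge on showing that each action of an \imped node strictly decreases $\textsc{State-Value-DC}(i,s)$ (reset: from $deg(i)+1$ to at most $deg(i)$; augmentation: by at least $1$), so that $\textsc{Rank-DC}$ is a well-founded variant, terminal states have no enabled node, and terminal states are exactly the optimal ones. The only differences are cosmetic --- you decompose into silence/closure/convergence where the paper shows \dag-traversal, existence of terminal successors, and optimality of terminal states --- and your anticipated AA race condition is in fact vacuous here, since $\textsc{\Imped-DC}(i)$ and both actions depend only on $i[cliq]$ and the static adjacency structure, never on other nodes' variables, which is why the paper dispatches asynchrony in a single remark after its proof.
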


\begin{proof}
    We need to show that (1) \Cref{algorithm:dc-dip} traverses a DAG of global states, (2) for all suboptimal states, $\exists$ a terminal successor, and (3) all terminal global states are optimal states.
    
    Let the current state be $s$.
    If $s$ is suboptimal, then for at least for one of the nodes $i$: (1) $i[cliq]$ is not a maximal clique, that is, there exists a $j$ in $Adj_i\setminus i[cliq]$ such that $i[cliq]\cup\{j\}$ forms a valid clique, or (2) the nodes in $i[cliq]$ do not form a clique.
    
    In the case that $s$ is suboptimal and the first case holds true for some node $i$, then under \Cref{algorithm:dc-dip}, $i$ will include a node $j$ in $i[cliq]$ which forms a clique with the nodes already present in $i[cliq]$, which reduces the state value of $i$ by at least 1.

    In the case that $s$ is suboptimal and the second case holds true for some node $i$, then under \Cref{algorithm:dc-dip}, $i$ will change $i[cliq]$ to be $\{i\}$, which reduces the state value of $i$ from $deg(i)+1$ to some value less than or equal to $deg(i)$.

    This shows a partial order being induced among the local states visited by an arbitrary node $i$. Thus under \Cref{algorithm:dc-dip}, an arbitrary graph will follow a DAG of states and if it transitions from a state $s$ to another state $s'$, then we have that $s'\succ s$ such that rank of $s'$ is less than the rank of $s$.

    If some node is \imped, then the rank of the corresponding global state is non-zero. When a \imped node $i$ makes an execution, then its state value reduces, until it becomes 0. Thus if there is a global state $s$ with rank greater than 0, then there exists at least one \imped node in it. When any node performs execution in $s$ then $s$ transitions to some state with rank less than $s$. This shows that for every suboptimal global state, there exists at least one terminal successor.

    Let that $s$ is a terminal successor. This implies that $\mathcal{P}(s)$ is true: no node is \imped in $s$, so any node will not change its state and $s$ manifests a dominant clique. Thus we have that all terminal states are optimal states, and \Cref{algorithm:dc-dip} is silent.
\end{proof}

\Cref{algorithm:dc-dip} fully tolerates asynchrony in AA model. This is because in a given state $s$ some node $i$ is \imped iff $i$ does not store a dominant clique, thus, $s$ will never transition to an optimal state without $i$ changing its state.

\subsection{Shortest Path (SP) Problem}\label{subsection:sp}

\begin{definition}\textbf{Shortest path.}
    In the shortest path problem, the input is a weighted arbitrary connected graph $G$ (all edge weights are positive) and a destination node $v_{des}$. Every node $i$ stores $i[p]$ (initialized with $\top$) and $i[d]$ (initialized with $\infty$). The task is to compute, $\forall i\in V(G)$, the length $i[d]$ of a shortest path from $i$ to $v_{des}$, and the parent $i[p]$ through which an entity would reach $v_{des}$ starting from $i$.
\end{definition}

The positive weights assigned for every edge $\{i,j\}\in E(G)$ denote the cost that is required to move from node $i$ to node $j$. In this problem, if we would have considered the local state of a node $i$ to be represented only by the variable $i[d]$ then
the local states of the nodes would form a total order. Consequently, the resultant discrete structure formed among the global states will be a $\prec$-lattice. This was shown in \cite{Garg2020}. 
On the other hand, in applications such as source routing \cite{Medhi2017}
where the source node specifies the path that should be taken, the local states form a partial order: such a system cannot be simulated within a total order.
For brevity,
we only represent the next hop, in $i[p]$.
The SP problem can be represented by the following predicate, where, $w(i,j)$ is the weight of edge $\{i,j\}$.
\begin{center}
    $\mathcal{P}_{sp}\equiv \forall i: (i[d]=dis(i,v_{des})=\min\{dis(j,v_{des})+w(i,j):j\in Adj_i\})\land$\quad $(i[p]=\text{arg }\min\{dis(j,v_{des})+w(i,j):j\in Adj_i\})$.
\end{center}
The local state of a node $i$ is defined by $\langle i[p], i[d]\rangle$. An \imped node $i$ in a state $s$ is a node for which its current parent is not a direct connection to the shortest path from $i$ to $v_{des}$. Formally,

\begin{center}
    $\textsc{\Imped-SP}(i)\equiv (i[d]\neq 0\land i=v_{des})\lor (\exists j\in Adj_i: i[d]>j[d]+w(i,j))$.
\end{center}


The algorithm is defined as follows. If an \imped node $i$ is $v_{des}$, then $i[d]$ is updated to 0 and $i[p]$ is updated to $v_{des}$. Otherwise, $i[p]$ is updated to the $j$ in $Adj_i$ for which $j[d]+w(i,j)$ is minimum.

\begin{algorithm}\label{algorithm:sp-dip}Rules for node $i$.
\end{algorithm}
\begin{center}
    $
    \begin{array}{|l|}
        \hline 
        \textsc{\Imped-SP}(i)\longrightarrow\\
        \begin{cases}
            i[d]=0, i[p]=i & \text{if $i=v_{des}$}\\
            \langle i[d],i[p]\rangle = \langle j[d]+w(i,j), j\rangle: j = \\
            \text{arg} \min\{k[d]+w(i,k):k\in Adj_i\} & \text{otherwise}
        \end{cases}~\\
        \hline 
    \end{array}
    $
\end{center}
\begin{lemma}
    The shortest path problem is a \dag-inducing problem.
\end{lemma}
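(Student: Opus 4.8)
The plan is to mirror the structure of the proof of \Cref{lemma:dc-dip}, since both lemmas assert that a problem is \dag-inducing. By \Cref{definition:dip}, I must exhibit a predicate $\mathcal{P}$ and a $\prec$-\dag $\mathcal{S}$ such that (i) the problem is solved exactly when $\mathcal{P}$ holds, (ii) $\mathcal{P}$ is \dag-inducing with respect to $\mathcal{S}$, meaning $\forall s:\lnot\mathcal{P}(s)\Rightarrow\exists i:\textsc{\Imped}(i,s,\mathcal{P})$, and (iii) any \imped node's current local state cannot appear in an optimal state. The natural candidates are $\mathcal{P}_{sp}$ for the predicate and the partial order on local states $\langle i[p],i[d]\rangle$ induced by $\textsc{\Imped-SP}$. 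Item (i) is immediate by definition of $\mathcal{P}_{sp}$, so the real content is establishing (ii) and (iii).

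First I would argue (ii): in any suboptimal state some node violates $\mathcal{P}_{sp}$, and I claim such a node is \imped. If $\lnot\mathcal{P}_{sp}(s)$, then either $v_{des}$ has $i[d]\neq 0$, or some node $i$ has $i[d]\neq\min\{j[d]+w(i,j):j\in Adj_i\}$ (equivalently its parent pointer is wrong); in either case the guard $\textsc{\Imped-SP}(i)$ fires. To complete the \imped requirement from \Cref{definition:impedensable-node}, I must check that for every $s'\succeq s$ with $s'[i]=s[i]$ we still have $\lnot\mathcal{P}(s')$. The key point is that an incorrect $i[d]$ value — one strictly exceeding the true shortest-path distance — keeps $\mathcal{P}_{sp}$ false regardless of how the other coordinates evolve upward in the order, because $\mathcal{P}_{sp}$ requires $i[d]=dis(i,v_{des})$ exactly.

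Next I would abstract the partial order via a state value, as is done for DC, presumably measuring how far $i[d]$ is above the correct distance $dis(i,v_{des})$, so that each move of an \imped node strictly decreases this value and thereby establishes that the local states visited by $i$ form a partial order with no cycles. Summing these state values over all nodes gives a rank whose strict decrease along every transition witnesses the $\prec$-\dag structure, and whose zero value characterizes optimal (sink) states, yielding (iii).

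The main obstacle I anticipate is the monotonicity/well-foundedness of the order relative to \emph{arbitrary} suboptimal initial values, which is more delicate than in DC. Distances $i[d]$ can start arbitrarily, and a node's update $i[d]=j[d]+w(i,j)$ reads a neighbour's value that may itself be wrong, so a single move need not monotonically approach $dis(i,v_{des})$ in a naive metric — it could even temporarily set $i[d]$ below the true distance if a neighbour underestimates. I would therefore need to choose the state value carefully (for instance tracking correctness in order of increasing true distance from $v_{des}$, so that $v_{des}$ stabilizes first, then its neighbours, and so on) to guarantee a genuine partial order in which \imped nodes only move upward. Getting this layered argument right, rather than the routine guard-violation bookkeeping, is where the real work lies.
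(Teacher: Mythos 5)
Your core argument is essentially the paper's own proof: show that a suboptimal state has a node violating $\mathcal{P}_{sp}$, and that such a node's violation persists in any $s'\succeq s$ with $s'[i]=s[i]$ (your observation that $\mathcal{P}_{sp}$ demands exact equality $i[d]=dis(i,v_{des})$, a property of the fixed graph, makes this step even cleaner than the paper's version, which argues that the guard $\exists j\in Adj_i: i[d]>j[d]+w(i,j)$ persists because neighbours' distances only decrease). The state-value/rank machinery you sketch afterwards belongs to the paper's \Cref{theorem:sp-dip} (convergence of \Cref{algorithm:sp-dip}), not to this lemma, which only concerns the predicate and the induced partial order.

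However, the ``main obstacle'' you anticipate is a non-issue, and it is worth seeing why. The SP problem as defined in the paper fixes the initialization: $i[d]=\infty$ and $i[p]=\top$ for all $i$. With positive edge weights and updates of the form $i[d]=j[d]+w(i,j)$, the invariant $i[d]\geq dis(i,v_{des})$ holds throughout every execution, so the underestimation scenario you worry about (a node setting $i[d]$ below its true distance because a neighbour underestimates) cannot arise; no layered argument ordered by true distance from $v_{des}$ is needed. Moreover, if you did try to handle arbitrary initial values, no choice of state value would rescue the lemma: in the state where every node has $i[d]=0$, no node satisfies $\textsc{\Imped-SP}$ (the guard $i[d]>j[d]+w(i,j)$ fails everywhere and $v_{des}[d]=0$), yet the state is suboptimal, so the \dag-inducing implication $\lnot\mathcal{P}(s)\Rightarrow\exists i:\textsc{\Imped}(i,s,\mathcal{P})$ simply fails. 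This is exactly why the paper remarks that $\forall i~\lnot\textsc{\Imped-SP}(i)$ is \dag-inducing but \emph{not} self-stabilizing. So the part you flagged as ``where the real work lies'' is, under the problem's actual definition, vacuous, and the ``routine guard-violation bookkeeping'' is the entire proof.
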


\begin{proof}
    For a node $i$, $i[d]$ contains the distance of $v_{des}$ from node $i$. A global state $s$ does not manifest all correct distances if for at least one node $i$ in $s$, (1) $dis(i,v_{des})\neq i[d]$, that is, $i$ does not store a shortest path from $i$ to $v_{des}$, or (2) the parent of $i$ is not a valid direct connection in a shortest path from $i$ to $v_{des}$.
    
    Next, we need to show that if some node $i$ in state $s$ violates $\mathcal{P}_{sp}$, then for each global state $s'$ such that $s'\succ s$, if $s'[i]=s[i]$, then $s'$ will not manifest all shortest paths. This is straightforward from the definition itself, that if a node $i$ is \imped, then either $i=v_{des}$ and it is not pointing to itself through $i[p]$, or there is at least one other node $j$ such that $i[d]>j[d]+w(i,j)$. If $i$ is \imped in $s$, and $i$ has the same state in some global state $s'$ such that $s'\succ s$, then $i$ stays \imped in $s'$ as well, and $s'$ does not satisfy $\mathcal{P}_{sp}$.
\end{proof}

To present the abstraction of \dag induction, we define the state value and rank as follows.

\begin{center}
    $
    \textsc{State-Value-SP}(i,s)=i[d]-dis(i,v_{des}).
    $

    $\textsc{Rank-SP}(s)=\sum\limits_{i\in V(G)}\textsc{State-Value-SP}(i,s).$
\end{center}
Under \Cref{algorithm:sp-dip}, the global states form a \dag. We show an example in \Cref{figure:sp-global-states}.
\Cref{figure:sp-global-states} (a) is the input graph and \Cref{figure:sp-global-states} (b) is the DAG induced among the global states. For a pair of global states $s$ and $s'$, $s\prec s'$ iff $\textsc{Rank-SP}(s')<\textsc{Rank-SP}(s)$. In \Cref{figure:sp-global-states}, a global state is represented as $\langle\langle v_1[p]$, $v_1[d]\rangle$, $...$, $\langle v_4[p]$, $v_4[d]\rangle\rangle$.

\begin{figure}[ht]
    \centering
    \subcaptionbox{}{
        \begin{tikzpicture}[scale=.8,every node/.style={scale=.8}]
            \node [circle, inner sep=2pt, fill=black, draw=black, label=above:$v_1$] (v1) at (0,0) {};
            \node [circle, inner sep=2pt, fill=black, draw=black, label=left:$v_2$] (v2) at (-1,-1) {};
            \node [circle, inner sep=2pt, fill=black, draw=black, label=right:$v_3$] (v3) at (1,-1) {};
            \node [circle, inner sep=2pt, fill=black, draw=black, label=below:{$v_4=v_{des}$}] (v4) at (0,-2) {};
            
            \draw (v4) -- node[below] {2} (v2); \draw (v4) -- node[below] {1} (v3);
            \draw (v1) -- node[above] {2} (v2); \draw (v1) -- node[above] {3} (v3);
        \end{tikzpicture}
    }
    \subcaptionbox{}{
        \begin{tikzpicture}[scale=.8,every node/.style={scale=.8}]
            \node (a1) at (0,0) {\begin{tabular}{l}$\langle\top,\infty\rangle$,$\langle\top,\infty\rangle$,\\$\langle\top,\infty\rangle$,$\langle\top,\infty\rangle$\end{tabular}};
            \node (a2) at (0,1.5) {\begin{tabular}{l}$\langle\top,\infty\rangle$,$\langle\top,\infty\rangle$,\\$\langle\top,\infty\rangle$,$\langle v_4,0\rangle$\end{tabular}};
            \node (a3) at (-2,3) {\begin{tabular}{l}$\langle\top,\infty\rangle$,$\langle v_4,2\rangle$,\\$\langle\top,\infty\rangle$,$\langle v_4,0\rangle$\end{tabular}};
            \node (a4) at (2,3) {\begin{tabular}{l}$\langle\top,\infty\rangle$,$\langle\top,\infty\rangle$,\\$\langle v_4,1\rangle$,$\langle v_4,0\rangle$\end{tabular}};
            \node (a5) at (0,4.5) {\begin{tabular}{l}$\langle\top,\infty\rangle$,$\langle v_4,2\rangle$,\\$\langle v_4,1\rangle$,$\langle v_4,0\rangle$\end{tabular}};
            \node (a6) at (-2,6) {\begin{tabular}{l}$\langle v_2,4\rangle$,$\langle v_4,2\rangle$,\\$\langle v_4,1\rangle$,$\langle v_4,0\rangle$\end{tabular}};
            \node (a7) at (2,6) {\begin{tabular}{l}$\langle v_3,4\rangle$,$\langle v_4,2\rangle$,\\$\langle v_4,1\rangle$,$\langle v_4,0\rangle$\end{tabular}};
            
            \draw (a1) -- (a2);
            \draw (a2) -- (a3); \draw (a2) -- (a4);
            \draw (a3) -- (a5); \draw (a3) -- (a6);
            \draw (a4) -- (a5); \draw (a4) -- (a7);
            \draw (a5) -- (a6); \draw (a5) -- (a7);
        \end{tikzpicture}
    }
    \caption{(a) Input graph. (b) \dag induced among the global states in evaluating for the shortest path parblem in the graph shown in (a); a global state is represented as $\langle\langle p.v_1,d.v_1\rangle,...,\langle p.v_4,d.v_4\rangle\rangle$. Transitive edges are not shown.
    }
    \label{figure:sp-global-states}
\end{figure}
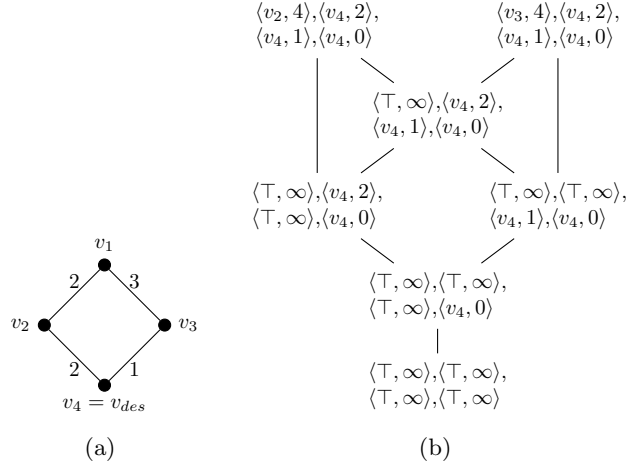




The above algorithm requires that all nodes are initialized where $i[d]=\infty,i[p]=\top$. If nodes were initialized arbitrarily (e.g., if for all nodes $i$, $i[d]=0$) then the algorithm does not compute shortest paths. Hence, $\forall i~\lnot\textsc{\Imped-SP}(i)$ is a \dag-inducing predicate but is not self-stabilizing; in turn, \Cref{algorithm:sp-dip}, that utilizes this predicate, is not self-stabilizing. 

\begin{theorem}\label{theorem:sp-dip}
    \Cref{algorithm:sp-dip} solves the shortest path problem, on a connected positive weighted graph, on $n$ nodes executing asynchronously.
\end{theorem}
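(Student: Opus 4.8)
The plan is to follow the template of the proof of \Cref{theorem:dc-dip} and reduce the claim to three ingredients: (1) that \Cref{algorithm:sp-dip} only moves nodes up the partial order, so the reachable global states form a \dag and $\textsc{Rank-SP}$ strictly decreases along every transition; (2) that every suboptimal reachable state contains an \imped node and hence admits a terminal successor; and (3) that every terminal successor satisfies $\mathcal{P}_{sp}$, i.e.\ is optimal. Since the preceding lemma already shows the shortest path problem is \dag-inducing, the substantive work is to verify convergence and correctness under asynchrony, \emph{starting from the prescribed initialization} $i[d]=\infty,\ i[p]=\top$ (recall the algorithm is not self-stabilizing, so I may rely on this initialization).

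First I would establish and maintain the invariant $i[d]\ge dis(i,v_{des})$ for every node $i$. It holds initially, and it is preserved by every action: if $i=v_{des}$ the action sets $i[d]=0=dis(v_{des},v_{des})$, and otherwise it sets $i[d]=j[d]+w(i,j)$ for the minimizing neighbour $j$, which by the inductive hypothesis and the triangle inequality $dis(i,v_{des})\le dis(j,v_{des})+w(i,j)$ is at least $dis(i,v_{des})$. The point I would stress is that this survives asynchrony: in both AA and AMR a node reads some \emph{past} value of $j[d]$, but each $i[d]$ is monotonically non-increasing (the guard $i[d]>j[d]+w(i,j)$ forces $\min_k\{k[d]+w(i,k)\}<i[d]$, even for stale reads, so each enabled action strictly decreases the acting node's $i[d]$), and therefore every past value of $j[d]$ is itself $\ge dis(j,v_{des})$ and cannot produce an underestimate. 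Consequently $\textsc{State-Value-SP}(i,s)=i[d]-dis(i,v_{des})$ is always non-negative and non-increasing, which re-derives the \dag\ structure and the strict decrease of $\textsc{Rank-SP}$ directly, including under asynchrony.

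For convergence I would argue by induction on distance layers. Order the distinct shortest-path distances $0=d_0<d_1<\cdots<d_L$ and set $V_{(k)}=\{i:dis(i,v_{des})=d_k\}$. For the base case, $v_{des}$ is \imped until $v_{des}[d]=0$, and once $v_{des}[d]=0$ it can never become \imped again (the second clause would need a neighbour $j$ with $0>j[d]+w(v_{des},j)$, impossible since $j[d]\ge0$ and $w(v_{des},j)>0$), so $v_{des}[d]$ stabilizes at $0$. For the inductive step, assume all nodes in $V_{(0)},\ldots,V_{(k-1)}$ have stabilized at their correct distances; any $i\in V_{(k)}$ has a next hop $j$ with $dis(j,v_{des})<d_k$, hence $j$ lies in an already-stabilized layer. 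By the eventual-delivery assumption of the asynchronous models, $i$ eventually reads the stabilized value $j[d]=dis(j,v_{des})$, and if $i[d]>dis(i,v_{des})$ at that moment then $i[d]>j[d]+w(i,j)$, so $i$ is \imped and its action sets $i[d]$ to $\min_k\{k[d]+w(i,k)\}=dis(i,v_{des})$, after which the invariant and the guard forbid any further change. With at most $n$ layers the whole system stabilizes in finite time. Parent correctness then comes for free: whenever an action sets $i[d]=dis(i,v_{des})$ via $i[d]=j[d]+w(i,j)$ (with $j[d]$ the value read by $i$), the invariant $j[d]\ge dis(j,v_{des})$ together with $dis(i,v_{des})\le dis(j,v_{des})+w(i,j)$ forces $j[d]=dis(j,v_{des})$ and $dis(i,v_{des})=dis(j,v_{des})+w(i,j)$, so $i[p]=j$ is a genuine next hop on a shortest path. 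Hence the stabilized state satisfies $\mathcal{P}_{sp}$ and is the optimal terminal successor.

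I expect the main obstacle to be the asynchrony argument rather than the sequential correctness. The two delicate points are (i) showing the lower-bound invariant is not broken by stale reads, which hinges entirely on the monotonicity of each $i[d]$, and (ii) converting the per-layer ``eventually reads a fresh value'' guarantee into \emph{finite} global convergence, which relies on the eventual-delivery property built into the AA and AMR models and on the finiteness of the layer ordering. Once these are secured, the correctness of both the computed distances and the parents follows from the same inequalities already used to show the problem is \dag-inducing.
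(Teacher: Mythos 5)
Your proposal is correct, and while it deliberately follows the three-part skeleton of the paper's proof of \Cref{theorem:dc-dip} (\dag traversal, existence of a terminal successor, terminal states are optimal), the technical content you supply is genuinely different from --- and substantially stronger than --- what the paper's proof of \Cref{theorem:sp-dip} actually does. The paper's proof is a rank argument: it asserts that an \imped node's move lowers its state value (claiming a decrease of ``at least 1,'' which is not literally true for non-integer positive weights), asserts that if no node is \imped then every $i[d]$ equals $dis(i,v_{des})$, and concludes that terminal successors satisfy $\mathcal{P}_{sp}$; it never engages with stale reads at all, even though the theorem claims correctness under asynchrony. Your proof fills exactly these gaps: the invariant $i[d]\ge dis(i,v_{des})$ preserved under stale reads (via monotone non-increase of each $i[d]$, so every \emph{past} value of $j[d]$ is also a valid lower-bounded estimate) is what actually justifies the claim that asynchronous moves still make downward progress in the $\prec$-\dag; the layer induction with eventual delivery is what actually justifies the paper's unproved assertion that ``no \imped node'' implies ``all distances correct'' (equivalently, that the only terminal successors are optimal, given the prescribed initialization); and the tightness argument $dis(i,v_{des})=j[d]+w(i,j)\ge dis(j,v_{des})+w(i,j)\ge dis(i,v_{des})$ is what establishes parent correctness, which the paper leaves implicit. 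In short, the paper buys brevity and uniformity with the DC case; your route buys an actual proof of the asynchrony claim, at the cost of needing the initialization assumption explicitly (which is legitimate, since the paper itself notes the algorithm is not self-stabilizing).
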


\begin{proof}
    We need to show that (1) \Cref{algorithm:sp-dip} traverses a DAG of global states, (2) for all suboptimal states, $\exists$ a terminal successor, and (3) all terminal global states are optimal states.
    
    Let the current state be $s$.
    If $s$ is suboptimal, then for at least one of the nodes $i$: (1) $i[p]\neq i\land i=v_{des}$, that is, $i$ is the destination node and is not pointing to itself, or (2) $dis(i,v_{des})\neq i[d]$, that is, $i$ does not store a shortest path from $i$ to $v_{des}$.
    
    In the case that $s$ is suboptimal and the first case holds true for some node $i$, then under \Cref{algorithm:dc-dip}, $i$ updates $i[d]$ to 0 and $i[p]$ to $i$, which reduces the state value of $i$ to 0.

    In the case that $s$ is suboptimal and the second case holds true for some node $i$, then under \Cref{algorithm:sp-dip}, $i$ will reduce its $i[d]$ value and update $i[p]$, which reduces the state value of $i$ at least by 1.

    This shows that a partial order is induced among the local states visited by an arbitrary node $i$. Thus under \Cref{algorithm:dc-dip}, an arbitrary graph will follow a DAG of global states and if it transitions from a state $s$ to another state $s'$, then we have that $s'\succ s$ such that rank of $s'$ is less than the rank of $s$.

    If no node is \imped, then this implies that all nodes have computed the shortest distance in their $i[d]$ variable, and thus the rank is 0. Thus if there is a global state $s$ with rank greater than 0, then there exists at least one \imped node in it. When any node performs execution in $s$ then $s$ transitions to some state $s'$ such that the rank of $s'$ is less than the rank of $s$. This shows that for every suboptimal global state, there exists at least one terminal successor.

    Let that $s$ is a terminal successor. Then, $\mathcal{P}(s)$ is true: no node is \imped in $s$, so any node will not execute and $s$ manifests correct shortest path evaluation for all nodes. Thus we have that all terminal states are optimal states, and \Cref{algorithm:sp-dip} is silent.
\end{proof}

\Cref{algorithm:sp-dip} fully tolerates asynchrony in AA model. This is because in a given state $s$, some node $i$ is \imped iff there is a shorter path that $i$ can follow to reach $v_{des}$, thus, $s$ will never transition to an optimal state without $i$ changing its state.

\subsection{Limitations of modelling problems as \dag-inducing problems}\label{subsection:dag-prob-limitations}

Unlike the \dag-inducing problems where the problem description creates a \dag among the states in $S$, there are problems where the states do not form a \dag naturally. Such problems are non-\dag-inducing problems. In such problems, there are instances in which the \imped nodes cannot be distinctly determined, i.e., in those instances
$\exists s :\lnot\mathcal{P}(s) \land (\forall i: \exists s':\mathcal{P}(s')\land s[i]=s'[i]$).


Maximal matching (MM) is a non-\dag-inducing problem.
This is because, for any given node $i$, an optimal state can be reached if $i$ does or does not change its state. Thus $i$ cannot be deemed as \imped or not \imped under the natural constraints of \mm.
This can be illustrated through a simple instance of a 3 nodes network forming a simple path $\langle A,B,C\rangle$. Initially no node is paired with any other node. Here, \mm can be obtained by matching $A$ and $B$.
Thus, $C$ is not \imped. 
Another maximal matching can be obtained by matching $B$ and $C$, in which case $A$ is not \imped.
Thus the problem itself does not define which node is \imped.


We observe that it is possible to induce a DAG in non-\dag-inducing problems algorithmically. We call such algorithms non-\dag-inducing algorithms, which we study in the following section.

\section{Imposed \dag Induction: \dag-inducing \textit{Algorithms}}\label{section:dia}


In this section, we study algorithms that can be developed for problems that cannot be represented by a predicate under which the global states form a DAG.
This is because, as described in \Cref{subsection:dag-prob-limitations}, in a suboptimal global state, the problem does not specify a specific set of nodes that must change their state.

\subsection{General Properties of \dag-Inducing Algorithms}\label{subsection:properties-dia}


Non-\dag-inducing problems do not naturally define which node is \imped. There may be multiple optimal states. However, \imped nodes can be defined algorithmically.



\begin{definition}\label{definition:dia}\textbf{\dag-inducing algorithms (DIA)}.
$A$ is a DIA for a problem $P$, represented by predicate $\mathcal{P}$, iff
\begin{itemize}
\item $P$ is solved iff the system reaches a state where $\mathcal{P}$ is true.
  \item $\mathcal{P}$ is \dag-inducing with respect to $\mathcal{S}$ induced in $S$ by $A$, i.e.\\ $\forall s\in S: \lnot\mathcal{P}(s) \Rightarrow \exists i \ \
     \textsc{\Imped}(i,s,\mathcal{P})$.


\end{itemize}
\end{definition}

\noindent \textbf{\textit{Remark}}: An algorithm that traverses a $\prec$-\dag $\mathcal{S}$ of global states is a DIA. Thus, an algorithm that solves a \dag-inducing problem, under the constraints of \dag-induction, e.g. Algorithms \ref{algorithm:dc-dip} \& \ref{algorithm:sp-dip}, is a DIA.

\begin{definition}\label{definition:ss-dia}\textbf{Self-stabilizing DIA}.
    Continuing from \Cref{definition:dia}, $A$ is self-stabilizing only if in the $\prec$-\dag $\mathcal{S}$ induced by $A$,
    $\forall s,s'\in S:\textsc{Terminal-Successor}(s,s')\Rightarrow\mathcal{P}(s')=true$.
\end{definition}





\subsection{Maximal Matching (MM) problem}\label{subsection:mm}

As discussed in \Cref{subsection:dag-prob-limitations}, MM is not a \dag-inducing problem.
However, a \dag-inducing algorithm can be developed for this problem, which we discuss in the following. 

\begin{definition}\textbf{Maximal matching}.
    In the maximal matching problem, the input is an arbitrary graph $G$. For all $i$, $i[match]$ has the domain $Adj_i\cup\{\top\}$. The task is to compute the matchings such that (1) $\forall i:i[match]\neq \top\Rightarrow (i[match])[match]=i$, and (2) if $i[match]=\top$, then there must not exist a $j$ in $Adj_i$ such that $j[match]=\top$.
\end{definition}

The local state of a node $i$ is defined by $\langle i[match]\rangle$. We use the following macros. A node $i$ is \textit{wrongly matched} if $i$ is pointing to some node $j$, but $j$ is pointing to some node $k\neq i$.
A node $i$ is \textit{matchable} if $i$ is not pointing to any node, i.e. $i[match]=\top$, and there exists a node $j$ adjacent to $i$ which is also not pointing to any node.
A node $i$ is being \textit{pointed to}, or $i$ is \textit{$i$-pointed}, if $i$ is not pointing to any node,
and there exists a node $j$ adjacent to $i$ which is pointing to $i$.
A node sees that another node is being pointed, or $i$ ``sees'' \textit{else-pointed}, if some node $j$ around (in 2-hop neighbourhood of) $i$ is pointing to another node $k$ and $k$ is not pointing to anyone.
A node is \textit{unsatisfied} if it is wrongly matched or matchable. A node $i$ is \textit{\imped} if $i$ is i-pointed, or otherwise, given that $i$ does not see else-pointed, $i$ is the highest ID unsatisfied node in its distance-2 neighbourhood.
\begin{center}
    \begin{tabular}{|l|}
        \hline
        $\textsc{Wrongly-Matched}(i)\equiv i[match]\neq \top\land$\\
        \quad \quad $(i[match])[match]\neq i\land (i[match])[match]\neq \top$.\\
        $\textsc{Matchable}(i)\equiv i[match]=\top\land (\exists j\in Adj_i:$\\
        \quad \quad $j[match]=\top)$.\\
        $\textsc{I-Pointed}(i)\equiv i[match]=\top\land(\exists j\in Adj_i:$\\
        \quad \quad $j[match]=i)$.\\
        $\textsc{Else-Pointed}(i)\equiv \exists j\in Adj^2_i,\exists k\in Adj_j:$\\
        \quad \quad $j[match]=k\land k[match]=\top$.\\
        $\textsc{Unsatisfied}(i)\equiv\textsc{Wrongly-Matched}(i)\lor$\\
        \quad \quad $\textsc{Matchable}(i)$.\\
        $\textsc{\Imped-MM}(i) \equiv\textsc{I-Pointed}(i)\lor$\\
        \quad \quad $(\lnot \textsc{Else-Pointed}(i)\land(\textsc{Unsatisfied}(i)\land$\\
        \quad \quad $(\forall j\in Adj^2_i:i[id]>j[id]\lor\lnot\textsc{Unsatisfied}(j)))$.\\
        \hline
    \end{tabular}
\end{center}

The algorithm for an arbitrary node $i$ can be defined as follows. If $i$ is \imped and i-pointed, then $i$ starts to point to the node which is pointing at $i$. If $i$ is wrongly matched and \imped, then $i$ takes back its pointer, i.e. $i$ starts pointing to $\top$. Otherwise (if $i$ is matchable and \imped), $i$ chooses a node $j$ which is not pointing to anyone, i.e. $match.j=\top$, and $i$ starts pointing to $j$.

\begin{algorithm}\label{algorithm:mm-dia}
    Rules for node $i$.
\end{algorithm}
\begin{center}
    $
        \begin{array}{|l@{}|}
            \hline
            \textsc{\Imped-MM}(i)\longrightarrow\\
            \begin{cases}
                i[match]=j:j\in Adj_i:\\
                \quad \quad j[match]=i & \text{if $\textsc{I-Pointed}(i)$}.\\
                i[match]=\top & \text{if $\textsc{Wrongly-Matched}(i)$}.\\
                i[match]=j:j\in Adj_i:\\
                \quad \quad j[match]=\top & \text{otherwise}.
            \end{cases}~\\
            \hline
        \end{array}
    $
\end{center}

\begin{lemma}\label{lemma:mm-dag-structure}
    \Cref{algorithm:mm-dia} induces a \dag in the global state space.
\end{lemma}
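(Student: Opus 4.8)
The plan is to exhibit a well-founded ranking function on global states that strictly decreases along every transition; this immediately forbids cycles and hence shows that $\mathcal{S}$ is a \dag. Concretely, I would define a three-component lexicographic potential $\Phi(s)=\langle a(s),w(s),f(s)\rangle$, where $a(s)$ is the number of nodes that are \emph{not} in a mutually matched pair (i.e.\ nodes $i$ for which no $j$ satisfies $i[match]=j \wedge j[match]=i$), $w(s)$ is the number of \textsc{Wrongly-Matched} nodes, and $f(s)$ is the number of free nodes (those with $i[match]=\top$). Each component is a non-negative integer bounded by $n$, so the lexicographic order is well-founded, and it then suffices to prove $\Phi(s')<\Phi(s)$ for every single-node transition $s\to s'$.

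Before the case analysis I would record one structural invariant: \emph{a mutually matched pair is absorbing}. If $i[match]=j$ and $j[match]=i$, then $i$ is neither \textsc{I-Pointed} (which requires $i[match]=\top$) nor \textsc{Unsatisfied} (it is not \textsc{Wrongly-Matched}, since $(i[match])[match]=i$, and not \textsc{Matchable}, since $i[match]\neq\top$); hence $i$ is not \imped and cannot move, and symmetrically for $j$. This guarantees that $a(s)$ never increases, which is what lets it serve as the dominant component.

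Next I would split on the three guarded actions, which are exhaustive because an \imped node is either \textsc{I-Pointed} (Move~1), \textsc{Wrongly-Matched} (Move~2), or, failing both, \textsc{Matchable} (Move~3). For Move~1, the free node $i$ starts pointing to a $j$ that already points to $i$, creating a fresh mutual pair, so $a$ drops by $2$; the primary component strictly decreases and I need not track $w$ or $f$ (indeed this may turn other suitors of $i$ into wrongly matched nodes and raise $w$, but $a$ dominates). For Move~2, resetting a wrongly matched $i$ to $\top$ breaks no mutual pair, so $a$ is unchanged, and it removes $i$ from the wrongly-matched count; crucially, any $m$ with $m[match]=i$ becomes un-wrong because $(m[match])[match]$ is now $\top$, so $w$ strictly decreases while $a$ holds fixed. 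For Move~3, the key point is that a \textsc{Matchable} node is not \textsc{I-Pointed}, so no node points at $i$; therefore redirecting $i$ from $\top$ to an unmatched neighbor creates no new wrongly matched node and breaks no pair, leaving both $a$ and $w$ fixed while the free count $f$ drops by one.

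I expect the main obstacle to be precisely the bookkeeping of $w$ under Move~3: one must argue that \emph{nobody} points at the executing node, which is exactly where the \textsc{I-Pointed} clause of the guard is used, and that $i$ itself does not become wrongly matched because it points to a $\top$-node. A secondary subtlety is pinning down the lexicographic priority so that Move~1's creation of wrongly matched suitors is harmless. Note that the \textsc{Else-Pointed} and highest-ID conditions inside $\textsc{\Imped-MM}$ play no role in acyclicity — they only restrict \emph{when} a move may fire and are needed for convergence to a \emph{maximal} matching — so the \dag claim holds for any move permitted by the three guarded actions. Once the strict decrease of $\Phi$ is established in all three cases, well-foundedness of the lexicographic order rules out any cycle $s_0\to\cdots\to s_0$, completing the proof.
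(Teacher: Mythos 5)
Your overall strategy --- a well-founded lexicographic potential $\Phi=\langle a,w,f\rangle$ that strictly decreases along every transition --- is a genuinely different route from the paper's proof. The paper never builds a ranking argument; it instead proves \emph{persistence of violation}: if a node $i$ is \imped in $s$ and retains its local state in any $s'\succ s$, then $s'$ still fails to be a maximal matching, because under \Cref{algorithm:mm-dia} no node in $Adj^2_i$ executes before $i$ does. That persistence property is what plugs into the paper's framework (\Cref{definition:impedensable-node} together with \Cref{definition:<-dag}) to give a partial order on the local states of each node and hence a \ldag. Your single-move case analysis (absorbing mutual pairs; Move 1 drops $a$ by $2$; Move 2 drops $w$ with $a$ fixed; Move 3 drops $f$ with $a,w$ fixed) is correct as far as it goes, and for central-scheduler executions it is arguably tighter than the paper's argument.

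The gap is your parenthetical claim that the \textsc{Else-Pointed} and highest-ID clauses ``play no role in acyclicity,'' together with the fact that your case analysis only ever considers transitions in which a single node moves. The transition system this lemma is about permits several nodes to execute in the same step (distributed scheduler, and ultimately asynchronous execution in the AMR model --- that is the point of the paper), and for such composite transitions the claim is false. Concretely, take a triangle $A,B,C$ with all nodes free: every node is \textsc{Matchable}, so if the guard were just your three cases, all three could fire Move 3 simultaneously choosing $A\to B$, $B\to C$, $C\to A$; now all three are \textsc{Wrongly-Matched}, so all three fire Move 2 simultaneously and the system returns to $\langle\top,\top,\top\rangle$ --- a cycle, on whose first step your $\Phi$ increases ($a$ unchanged, $w$ up from $0$ to $3$). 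It is precisely the highest-ID and \textsc{Else-Pointed} clauses that exclude this behaviour: they ensure that within any distance-2 neighbourhood at most one unsatisfied node moves, that the target of a Move-3 node is never itself enabled, and that the suitors of a Move-2 node stay put. Those are exactly the facts you need in order to show that $\Phi$ still decreases when moves compose, so your proof is repairable, but only by adding the analysis of simultaneous moves and by using the guard clauses you dismissed --- the same clauses the paper's own proof leans on when it asserts that neighbours of an \imped node do not execute before it.
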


\begin{proof}
    The \dag is induced in the global state space with respect to the state values, which we prove in the following. Let $s$ be a suboptimal state that the input graph is in. A node $i$ is \imped in $s$ (1) if $i$ is wrongly matched, (2) if $i$ is matchable, or (3) if $i$ is being pointed at by another node $j$, but $i$ does not point back to $j$ or any other node.

    Next we show that if some node $i$ is \imped in some state $s$, then for any state $s':s'\succ s$, if $s'[i]=s[i]$, then $s'$ will not form a maximal matching under \Cref{algorithm:mm-dia}.
    
    In the case if $i$ is wrongly matched in $s$ and is \imped, and it is pointing to the same node in $s'$ as well, then $i$ stays to be \imped in $s'$, because any other node will not take back its pointer before $i$ does, under \Cref{algorithm:mm-dia}. Thus $s'$ does not have a correct matching. 
    
    In the case if in $s$, $i$ is being pointed to by some node but $i$ does not point to any node, and $i$ stays in the same state in $s'$, then $i$ stays to be \imped in $s'$ because once a node points towards an unmatched node, it does not retreat its pointer under \Cref{algorithm:mm-dia}. Also, any node in $Adj^2_j$ will not execute until $i$ does under. Thus $s'$ does not form a correct matching.
    
    Finally, in the case if $i$ is matchable and \imped in $s'$, and it stays the same in $s'$, then it is still \imped as any other node in $Adj_i$ will not initiate matching with it under \Cref{algorithm:mm-dia}. Also, any node in $Adj^2_i$ will not execute until $i$ does. Thus $s'$ does not manifest a maximal matching.
\end{proof}

To present the abstraction of \dag induction, we define the state value and rank as follows.
\begin{center}
    $
\begin{array}{l}
    \textsc{State-Value-MM}(i,s)=\\
    \begin{cases}
        3 & \text{if $\textsc{Wrongly-Matched}(i)$}. \\
        2 & \text{if $\textsc{Matchable}(i)\land\lnot\textsc{I-Pointed}(i)$}.\\
        1 & \text{if $\textsc{I-Pointed}$(i)}.\\
        0 & otherwise.
    \end{cases}
\end{array}
$

$\textsc{Rank-MM}(s)=\sum\limits_{i\in V(G)}\textsc{State-Value-MM}(i,s).$
\end{center}
Under \Cref{algorithm:mm-dia}, the global states form a \dag. We show an example in \Cref{figure:mm-global-states}: \Cref{figure:mm-global-states} (a) is the input graph and \Cref{figure:mm-global-states} (b) is the induced DAG. For a pair of global states $s$ and $s'$, $s\prec s'$ iff $\textsc{Rank-MM}(s')<\textsc{Rank-MM}(s)$. In \Cref{figure:mm-global-states}, a global state is represented as $\langle\langle match.v_1\rangle$, $...$, $\langle match.v_4\rangle\rangle$.

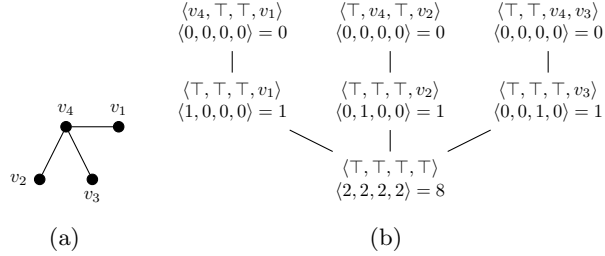
\begin{figure}[ht]
    \centering
    \subcaptionbox{}{
        \begin{tikzpicture}[scale=.7,every node/.style={scale=.7}]
            \node [circle, draw=black,fill=black,inner sep=2pt,label=above:$v_4$] (a) at (0,0) {};
            \node [circle, draw=black,fill=black,inner sep=2pt,label=left:$v_2$] (b) at (-.5,-1) {};
            \node [circle, draw=black,fill=black,inner sep=2pt,label=below:$v_3$] (c) at (.5,-1) {};
            \node [circle, draw=black,fill=black,inner sep=2pt,label=above:$v_1$] (d) at (1,0) {};
            
            \draw (a) -- (b); \draw (a) -- (c);\draw (a) -- (d);
        \end{tikzpicture}
    }
    \subcaptionbox{}{
        \begin{tikzpicture}[scale=.7,every node/.style={scale=.7}]
            \node (a) at (0,0) {\begin{tabular}{c}$\langle \top,\top,\top,\top\rangle$\\$\langle 2,2,2,2\rangle=8$\end{tabular}};
            \node (b) at (-3,1.5) {\begin{tabular}{c}$\langle \top,\top,\top,v_1\rangle$\\$\langle 1,0,0,0\rangle=1$\end{tabular}};
            \node (c) at (0,1.5) {\begin{tabular}{c}$\langle \top,\top,\top,v_2\rangle$\\$\langle 0,1,0,0\rangle=1$\end{tabular}};
            \node (d) at (3,1.5) {\begin{tabular}{c}$\langle \top,\top,\top,v_3\rangle$\\$\langle 0,0,1,0\rangle=1$\end{tabular}};
            
            \node (b2) at (-3,3) {\begin{tabular}{c}$\langle v_4,\top,\top,v_1\rangle$\\$\langle 0,0,0,0\rangle=0$\end{tabular}};
            \node (c2) at (0,3) {\begin{tabular}{c}$\langle \top,v_4,\top,v_2\rangle$\\$\langle 0,0,0,0\rangle=0$\end{tabular}};
            \node (d2) at (3,3) {\begin{tabular}{c}$\langle \top,\top,v_4,v_3\rangle$\\$\langle 0,0,0,0\rangle=0$\end{tabular}};
            \draw (a) -- (b); \draw (a) -- (c);\draw (a) -- (d);
            \draw (b2) -- (b); \draw (c2) -- (c);\draw (d2) -- (d);
        \end{tikzpicture}
    }
    \caption{(a) Input graph. (b) The state transition diagram, a DAG, assuming that the initial global state is $\langle \top,\top,\top,\top\rangle$. In every state, the first row shows the global state, the second row shows the respective local state values of nodes and the rank of the global state. Observe that any other state will converge to one of these states and then converge to one of the optimal states in this \dag. Transitive edges are not shown for brevity.}
    \label{figure:mm-global-states}
\end{figure}


Since the solution presented for this problem is self-stabilizing, $\forall i\lnot \textsc{\Imped-MM}(i)$ forms a self-stabilizing predicate with respect to the $\prec$-\dag induced by \Cref{algorithm:mm-dia}. Thus, \Cref{algorithm:mm-dia} is a \dag-inducing self-stabilizing algorithm and satisfies \Cref{definition:ss-dia}. 

\begin{theorem}\label{theorem:mm-dia}
    \Cref{algorithm:mm-dia} is a \dag-inducing algorithm for the maximal matching problem on $n$ nodes executing asynchronously.
\end{theorem}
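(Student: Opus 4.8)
The plan is to reuse the three-part template of Theorems~\ref{theorem:dc-dip} and~\ref{theorem:sp-dip}: establish that (1) \Cref{algorithm:mm-dia} traverses a $\prec$-\dag of global states, (2) every suboptimal state has a terminal successor, and (3) every terminal successor is optimal. Part~(1) is immediate from \Cref{lemma:mm-dag-structure}, which already shows the algorithm induces a $\prec$-\dag abstracted by \textsc{Rank-MM}. I would also record, directly from the problem definition, that the maximal-matching predicate $\mathcal{P}$ (its two conditions) holds exactly at maximal matchings, so that ``solved'' and ``$\mathcal{P}$ true'' coincide, discharging the first clause of \Cref{definition:dia}.

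The core of part~(2) is the \dag-inducing property: every suboptimal state $s$ contains an \imped node. I would split on whether any node is \textsc{I-Pointed}. If some node is i-pointed, it is \imped by the first disjunct of \textsc{\Imped-MM} and we are done. Otherwise, I would first prove the auxiliary fact that \textsc{I-Pointed} failing everywhere forces \textsc{Else-Pointed} to fail everywhere, since any witness $j[match]=k,\ k[match]=\top$ with $k\in Adj_j$ would make $k$ i-pointed. With no i-pointed node, $\lnot\mathcal{P}(s)$ must be witnessed by a \textsc{Wrongly-Matched} or \textsc{Matchable} node, i.e.\ an \textsc{Unsatisfied} node; the globally highest-ID \textsc{Unsatisfied} node $i^\ast$ then satisfies $\forall j\in Adj^2_{i^\ast}:i^\ast[id]>j[id]\lor\lnot\textsc{Unsatisfied}(j)$ (any unsatisfied $j$ has smaller ID, and every other $j$ discharges the second disjunct) and, not seeing else-pointed, is \imped. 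Because an \imped node can always execute and strictly change its local state, $s$ has an outgoing edge; combined with the acyclicity from \Cref{lemma:mm-dag-structure} and finiteness of $S$, following such edges reaches a sink, i.e.\ a terminal successor.

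For part~(3) I would prove the contrapositive: a state with no \imped node satisfies $\mathcal{P}$. No \imped node implies no i-pointed node (each would be \imped) and hence no else-pointed node, so the highest-ID \textsc{Unsatisfied} node, if any, would again be \imped; therefore no \textsc{Unsatisfied} node exists. Absence of \textsc{Matchable} nodes yields maximality (condition~(2) of the problem), while absence of \textsc{Wrongly-Matched} and of i-pointed nodes forces every matched node to be matched mutually (condition~(1)). Thus all terminal successors are optimal, matching \Cref{definition:ss-dia}, so \Cref{algorithm:mm-dia} is a self-stabilizing \dag-inducing algorithm. For the asynchronous claim I would argue, as in the remarks following Theorems~\ref{theorem:dc-dip} and~\ref{theorem:sp-dip}, that \textsc{\Imped-MM} depends only on $i$'s distance-2 neighbourhood, so the system cannot reach a maximal matching while an \imped node refrains from acting.

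I expect the main obstacle to be the convergence accounting in the \textsc{I-Pointed} branch when the run starts from an arbitrary state, as self-stabilization demands. If several neighbours all point at a single unmatched i-pointed node $i$, then once $i$ commits to one of them the others become \textsc{Wrongly-Matched}, whose state value jumps from $0$ to $3$, so a literal per-step reading of \textsc{Rank-MM} can momentarily rise rather than fall. The careful point is to show progress is nonetheless genuine and no cycle appears: the \textsc{Else-Pointed} guard freezes the 2-hop neighbourhood of a one-sided pointer until it is resolved, and the ID priority serialises the forced retractions that follow. I would make this rigorous either by amortising each such step against the retraction it provokes, or by replacing \textsc{Rank-MM} with a lexicographic potential that separately counts one-sided pointers into unmatched nodes; reconciling this with the stale reads permitted in the AA/AMR models, rather than the existence-of-\imped-node argument, is where the real difficulty lies.
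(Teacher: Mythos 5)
Your proposal follows the same skeleton as the paper's proof of \Cref{theorem:mm-dia} --- the three-part template (traversal of the \dag from \Cref{lemma:mm-dag-structure}, existence of a terminal successor for every suboptimal state, optimality of terminal successors), the case split between \textsc{I-Pointed} nodes and the highest-ID \textsc{Unsatisfied} node, and the observation that a terminal state manifests a maximal matching --- so in outline you and the paper agree. But your version is more careful in three places, and the differences matter. First, you prove the auxiliary fact that if no node is i-pointed then no node is else-pointed (any witness $j[match]=k,\ k[match]=\top$ makes $k$ i-pointed); the paper's proof silently assumes that when it invokes the highest-ID unsatisfied node, that node also satisfies $\lnot\textsc{Else-Pointed}$, and your lemma is exactly what discharges that. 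Second, you obtain terminal successors from acyclicity plus finiteness plus the existence of an \imped node in every suboptimal state, whereas the paper derives them from the claim that every transition strictly decreases \textsc{Rank-MM}. That claim is in fact false in the self-stabilizing setting, and your final paragraph pinpoints why: starting from a corrupted state in which two nodes $j_1,j_2$ both point at an unmatched node $i$, node $i$ is i-pointed (rank $1$, since neither $j_1$ nor $j_2$ is wrongly matched while $i[match]=\top$); when $i$ matches $j_1$, node $j_2$ becomes \textsc{Wrongly-Matched} and its state value jumps from $0$ to $3$, so the rank rises from $1$ to $3$ even though the execution then converges. So your worry is not a defect in your own argument --- your sink-existence argument never uses rank monotonicity --- but a genuine gap in the paper's, and your proposed repairs (an amortized or lexicographic potential, or simply decoupling convergence from \textsc{Rank-MM} as you already do) are the right way to close it. Third, your contrapositive proof of part (3) (no \imped node $\Rightarrow$ no i-pointed $\Rightarrow$ no else-pointed $\Rightarrow$ no unsatisfied node $\Rightarrow$ both clauses of the matching predicate hold) actually verifies \Cref{definition:ss-dia}, where the paper only asserts that a terminal state ``will manifest a maximal matching.'' The one thing you leave open, reconciling the potential argument with stale reads in the AMR model, is likewise left open by the paper, which handles asynchrony only in the one-line remark following the theorem.
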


\begin{proof}
    We show that (1) \Cref{algorithm:mm-dia} traverses a \dag that has the properties as mentioned in \Cref{lemma:mm-dag-structure}, (2) for all subpotimal states $\exists$ a terminal successor, and (3) all terminal global states are optimal states.

    If $s$ is suboptimal, then for at least one of the nodes $i$: (1) $i$ is wrongly matched, (2) $i$ is matchable, or (3) $i$ is being pointed at by another node $j$, but $i$ does not point back to $j$ or any other node.

    If $s$ is suboptimal and some node $i$ is being pointed to by some node $j$ and $i$ does not point to any node, then under \Cref{algorithm:mm-dia}, $i$ will point back to $j$, and thus the state value of $i$ will get reduced from $1$ to $0$.
    
    In the case that $s$ is suboptimal and some node $j$ is wrongly matched or matchable with $\lnot\textsc{Else-Pointed}(j)$, then at least one node (e.g., a node with highest ID which is wrongly matched or matchable) will be unsatisfied and \imped. Let that $i$ is unsatisfied and \imped. Here $i$ is either wrongly matched or matchable. If $i$ is wrongly matched, then $i$ will change its pointer and start pointing to $\top$, in which case its state value will change from 3 to 2, 1, or 0. If $i$ is matchable then, $i$ will start pointing to some node $j$ in $Adj_i$, in which case, the state value of $i$ will change from 2 to 0 and the state value of $j$ will change from 2 to 1.

    In all the above cases, we have that under \Cref{algorithm:mm-dia}, if $s$ is a suboptimal state, then its rank will be of some value greater than zero because at least one of the nodes will be \imped. $s$ will transition to some state $s'$ whose rank is less than that of $s$. Thus, we have that \Cref{algorithm:mm-dia} transitions $s$ to $s'$ and thus decreases the rank of the system. This shows that (1) \Cref{algorithm:mm-dia} traverses a \dag that has the properties as mentioned in \Cref{lemma:mm-dag-structure}, (2) for all subpotimal states $\exists$ a terminal successor.

    In the case that $s$ is a terminal successor, then none of the nodes will be enabled. So no node will change its state. This state will manifest a maximal matching. This shows that all terminal successors are optimal states, and \Cref{algorithm:mm-dia} is silent.
\end{proof}


From \Cref{theorem:mm-dia}, we have that \Cref{algorithm:mm-dia} fully tolerates asynchrony in AMR model. 
%
However, \Cref{algorithm:mm-dia} cannot tolerate asynchrony in AA model. This is because a node $i$ in its current state may wrongly evaluate itself to be i-pointed or unsatisfied-\imped if it gets information, from other nodes, out of order. This can result in $i$ changing its state incorrectly. As a consequence of execution in $AA$ model, $i$ can keep repeating such execution and as a result, the system may not obtain an optimal state.

\section{Properties of DAG-Induction}\label{section:dag-properties}

\subsection{DAG-induction to obtain asynchrony}\label{subsection:dag<->asynch}

\arya{

}

In this subsection, we study whether \ldag induction is necessary and sufficient for asynchronous execution.



\begin{theorem}\label{theorem:sufficiency}

Let $P$ be a problem that requires an algorithm to converge to a state where $\mathcal{P}$ is true. 
Let A be an algorithm for $P$ that is correct under 
a central scheduler.


\noindent If the state transition system $\mathcal{S}$ forms a $\prec$-\dag under A, then A guarantees convergence in asynchrony.


    
\end{theorem}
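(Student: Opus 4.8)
The plan is to show that every computation of $A$ in the arbitrary-asynchrony (AA) model --- started from a state for which $A$ is central-scheduler-correct and carrying the standard eventual-delivery guarantee --- ascends monotonically through the $\prec$-\dag and halts at a sink that must be optimal. I would split this into three claims: (i) the sequence of \emph{true} global states is a strictly ascending chain in $\prec_g$; (ii) this chain is finite and ends at a state $s^\ast$ in which no node is enabled; and (iii) any such $s^\ast$ satisfies $\mathcal{P}$.

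For (i), fix an asynchronous step in which node $i$ acts from the true state $s_\ell$ using a possibly stale view $s'$. Since $i$ reads its own variables freshly, $s'[i]=s_\ell[i]$, and $s'$ is itself a legitimate global state in $S$. If $i$'s guard holds on $s'$, then $\langle s', s''\rangle$ --- with $s''$ the result of letting $i$ fire --- is a bona fide central-scheduler transition, hence an edge of $\mathcal{S}$; as $\mathcal{S}$ is a $\prec$-\dag this edge respects $\prec_g$, so $s'[i]\prec_l s''[i]$. The real update sets $s_{\ell+1}[i]=s''[i]$ and freezes the other components, and because $s'[i]=s_\ell[i]$ we obtain $s_\ell[i]\prec_l s_{\ell+1}[i]$ with $s_{\ell+1}[j]=s_\ell[j]$ for $j\neq i$; by \Cref{definition:<-dag} this is precisely $s_\ell\prec_g s_{\ell+1}$. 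The decisive point is that $\prec_l$ orders local states \emph{independently of the global context}, so the ascent survives the staleness of $s'$; this is exactly the context-freeness the maximal-matching encoding lacks, consistent with its algorithm being only AMR-tolerant.

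For (ii), $S$ is finite, so the $\prec$-\dag is finite and a strictly ascending chain permits only finitely many moves; after the last one the true state is a fixed $s^\ast$. To argue $s^\ast$ is a genuine sink --- no node enabled on the \emph{true} state rather than on some stale view --- I would use eventual delivery: once the true state is permanently $s^\ast$, stale values are eventually flushed and each node eventually evaluates its guard against the fresh $s^\ast$; were any node enabled there it would move, contradicting that no move follows the last one. Hence $s^\ast$ is terminal, and since $s^\ast\succ s_0$ by (i), it is a terminal successor of $s_0$.

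For (iii): because reachability in $\mathcal{S}$ coincides with $\succ$, the terminal successor $s^\ast$ lies on some directed $\mathcal{S}$-path out of $s_0$, i.e.\ on some central-scheduler run, which (as $s^\ast$ is a sink) halts at $s^\ast$; central-scheduler correctness then forces $\mathcal{P}(s^\ast)$ to hold. \textbf{I expect the main obstacle to be this last bridge.} An asynchronous step is computed from a stale view, so the pair $\langle s_\ell,s_{\ell+1}\rangle$ is generally \emph{not} a central-scheduler edge and the asynchronous run is not itself a central-scheduler computation, so one cannot invoke correctness along it directly. My way around this is to keep the two concerns separate: use the local $\prec_l$-ascent only to certify that \emph{some} sink is reached, and use the structural identity ``reachability $=\succ$'' to relocate that sink onto a genuine central-scheduler run where correctness applies. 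If a self-contained argument is preferred, the same gap can be closed by proving the view-monotonicity lemma $s'\preceq_g s_\ell$ (each stale component is an earlier, hence $\prec_l$-smaller, value, with $s'[i]=s_\ell[i]$) and combining it with the \imped closure of \Cref{definition:impedensable-node}, which promotes ``enabled on $s'$'' to ``\imped on $s_\ell$'' and yields optimality of $s^\ast$ directly from $\lnot\mathcal{P}(s)\Rightarrow\exists i\,\textsc{\Imped}(i,s,\mathcal{P})$.
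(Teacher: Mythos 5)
Your proposal is correct in substance, but your primary decomposition is genuinely different from the paper's proof, which is far terser and coincides with what you call your fallback. The paper's entire argument is: (a) since each node permanently rejects violating local states, a \ldag is induced among the global states; and (b) a node reading old values sees the true state $s$ as some $s'$ with $s'\prec s$ and $s'[i]=s[i]$, so by the closure in \Cref{definition:impedensable-node}, $\lnot\mathcal{P}(s')\Rightarrow\lnot\mathcal{P}(s)$ --- a node that is \imped in its stale view is still \imped in the true state, hence acting on stale values is as good as acting on fresh ones and central-scheduler correctness carries over. The paper never argues termination explicitly; your claims (i) and (ii) (monotone ascent of the true global states, termination at a sink via finiteness of $S$ plus eventual delivery) supply rigor the paper omits, and your remark that the ascent survives staleness precisely because $\prec_l$ orders local states independently of global context is a good articulation of why the maximal-matching algorithm is only AMR-tolerant. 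The one step where your primary route is weaker than the paper's is the bridge in (iii): relocating the sink onto a central-scheduler run via ``reachability $=\succ$'' leans on the paper's definitional conflation $\textsc{Successors}(s)\equiv\{s':s'\succ s\}$; under a strict reading of the hypothesis --- every transition of $\mathcal{S}$ respects $\prec_g$ --- order-relatedness between two states does not by itself yield a path of single-node transitions between them, so that identity is not available for free, and the asynchronous run itself is, as you note, not a central-scheduler computation. Your fallback (view-monotonicity $s'\preceq_g s_\ell$ combined with the \imped closure, giving $\mathcal{P}(s^\ast)$ at the sink directly from $\lnot\mathcal{P}(s)\Rightarrow\exists i\,\textsc{\Imped}(i,s,\mathcal{P})$) is exactly the paper's argument and closes that gap soundly; it should be promoted from fallback to the main argument, with your (i) and (ii) retained as the explicit convergence scaffolding the paper lacks.
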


\begin{proof}
    \Cref{definition:dia} follows that the local states form a partial order if every node, under Algorithm $A$, rejects each violating local state permanently. Consequently, a $\prec$-\dag is induced among the global states.
    
    A $\prec$-\dag, induced under $\mathcal{P}$, allows asynchrony because if a node, reading old values, reads the current state $s$ as $s'$, then $s'\prec s$. So $\lnot\mathcal{P}(s')\Rightarrow \lnot\mathcal{P}(s)$ because $\textsc{\Imped}(i,s',\mathcal{P})$ and $s'[i]=s[i]$.
    %
\end{proof}

The limitation of the application of the above theorem is as follows. An algorithm $A$ guarantees convergence in asynchrony in some model only if it induces a \ldag in that model. For instance, a \Cref{algorithm:mm-dia} developed for maximal matching induces a \ldag in AMR model. As we discussed in \Cref{subsection:mm}, such an algorithm does not necessarily induce a \ldag in AA model, so it may not guarantee convergence in AA model.


In the above theorem, we showed that a $\prec$-\dag is sufficient for allowing asynchronous executions. 
Next, we study if a \ldag is guaranteed to be induced among the global states given that an algorithm is correct under asynchrony. In other words, we study if a \ldag is necessary for asynchrony.
To study that, we first examine the representation of a global state, which is a mathematical abstraction of a multiprocessor system.

Let $C$ be an algorithm that runs correctly under a central scheduler.
%
Let that $R^{cent}_s$ be the set of states, that $s$ can transition to, under $C$, i.e. $\forall s'\in R^{cent}_s$, $\langle s,s'\rangle$ is a valid transition under $C$.
If $C$ and $s$ are given, $R^{cent}_s$ can be correctly computed. 
In synchronous systems, $s$ is an abstraction of the global state that only contains the \textit{current} local states of the nodes.
However, if some algorithm $A$ were to be executed in asynchrony, then a set $R^{asyn}_s$ of resulting states cannot be computed correctly for a given state $s$. This is because some nodes would be reading the old values of other nodes. Thus, $s$ must also contain the details of the information that nodes have about other nodes.

Let $\mathcal{S}$ be the original transition system, and $\mathcal{S}_{ext}$ be its extended version, where a given state $s_{ext}$ identifies local states of individual nodes and the information that the nodes maintain about other nodes. 
Similarly, $\mathcal{S}_{ext}$ can be constructed back to $\mathcal{S}$, by removing the information that the nodes have about other nodes, and merging the global states (along with the transition edges) which the same global states.





As we noted in this paper, the \textsc{State-Value} of a node $i$
provides information about how \textit{bad} the current local state of $i$ is, with respect to an optimal global state farthest from its current global state $s$. Thus, the evaluation of the state value of $i$ can utilize the information about the local states of other nodes in $s$. In the following theorem, we use this observation as leverage to show some interesting properties of \dag-inducing systems. Herein, instead of $\mathcal{S}$ we elaborate on the necessity of $\mathcal{S}_{ext}$ being a \ldag to allow asynchrony.

\begin{theorem}\label{theorem:necessity}
    Let $P$ be a problem that requires an algorithm to converge to a state where $\mathcal{P}$ is true. 

    \noindent If an algorithm $A$ guarantees convergence in asynchrony, then $\mathcal{S}_{ext}$ (the extended transition system) forms a \ldag. 
    
    
\end{theorem}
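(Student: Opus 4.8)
I will show that if $\mathcal{S}_{ext}$ is \emph{not} a $\prec$-\dag, then $A$ cannot guarantee convergence in asynchrony; equivalently, convergence in asynchrony forces $\mathcal{S}_{ext}$ to form a $\prec$-\dag. The key insight to exploit is the one flagged just before the statement: in the extended system, a state $s_{ext}$ records not only the current local states but also the (possibly stale) information each node holds about its neighbours. In asynchrony, the state that node $i$ actually ``sees'' when it executes is exactly one of these recorded snapshots, so the set of transitions available under asynchronous execution is captured precisely by $\mathcal{S}_{ext}$ rather than by the bare $\mathcal{S}$.

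\medskip

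\noindent\textbf{Step 1 (reduce non-\dag to a cycle).} If $\mathcal{S}_{ext}$ is not a $\prec$-\dag, then by \Cref{definition:<-dag} the local states visited by some node do not form a partial order, so there must exist a directed cycle of transitions $s^0_{ext}\to s^1_{ext}\to\cdots\to s^{k}_{ext}=s^0_{ext}$ in $\mathcal{S}_{ext}$. (A partial order has no cycles; the absence of the partial-order structure is what distinguishes a non-\dag here.) I would isolate such a minimal cycle as the object to attack.

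\medskip

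\noindent\textbf{Step 2 (realise the cycle as an asynchronous run).} The heart of the argument is to build an actual asynchronous computation of $A$ that traverses this cycle forever, thereby never reaching a state where $\mathcal{P}$ holds. Because each edge $s^\ell_{ext}\to s^{\ell+1}_{ext}$ in $\mathcal{S}_{ext}$ corresponds to some node $i$ reading a particular (stale) snapshot and executing its action, I would schedule exactly those reads and moves in order around the cycle. The AA model explicitly permits a node to read arbitrarily old values subject only to eventual delivery; I must therefore check that the stale reads needed to keep cycling can be arranged while still respecting eventual delivery. The cleanest way is to let the cycle repeat only finitely on any fixed ``old'' value but use a fresh batch of delayed messages each lap, so that no single message is delayed forever, yet the global configuration keeps returning to $s^0_{ext}$.

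\medskip

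\noindent\textbf{Step 3 (contradiction with convergence).} Such an infinite fair computation visits $s^0_{ext}$ infinitely often and never stabilises, contradicting the hypothesis that $A$ guarantees convergence in asynchrony. Hence no cycle exists, $\mathcal{S}_{ext}$ induces a partial order on each node's local states, and by \Cref{definition:<-dag} it is a $\prec$-\dag.

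\medskip

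\noindent\textbf{The main obstacle} I anticipate is Step 2: making the fairness/eventual-delivery constraint of the AA model compatible with an infinite cyclic run. A naive construction delays one message forever, which is illegal. I expect the fix to require a careful interleaving argument — reconstructing $\mathcal{S}_{ext}$ from $\mathcal{S}$ (as the excerpt describes) so that the recorded ``information nodes have about others'' is rich enough that each lap of the cycle can be driven by freshly generated stale reads rather than one perpetually-delayed value. Establishing that this extended bookkeeping genuinely captures \emph{every} asynchronous behaviour, and no more, is the delicate modelling step on which the whole necessity direction rests.
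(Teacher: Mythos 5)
Your Steps 2--3 flesh out what is in fact the easy half of the theorem --- the half the paper dispatches in a single sentence (``since $A$ guarantees to terminate under asynchrony \ldots there will be no cycles present among the global states in $\mathcal{S}_{ext}$'') --- and your care about eventual delivery there is if anything more thorough than the paper. The genuine gap is in Step 1: you equate ``$\mathcal{S}_{ext}$ is not a \ldag'' with ``$\mathcal{S}_{ext}$ contains a directed cycle,'' and your parenthetical justification (``a partial order has no cycles'') establishes only the direction you do not need, namely cycle $\Rightarrow$ not a \ldag. By \Cref{definition:<-dag}, a \ldag is not merely an acyclic transition system: it is one whose global transitions are \emph{generated by partial orders $\prec_l$ on the local states} of the individual nodes. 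A transition system can be acyclic among global states and still fail to be a \ldag; for instance, if node $i$ moves from local state $a$ to $b$ along one execution path and from $b$ back to $a$ along another, all global states involved can be pairwise distinct (the other nodes' components differ), so there is no global cycle, yet no single local order $\prec_l$ is consistent with both transitions. So a minimal counterexample to the \ldag property need not yield any cycle for your Steps 2--3 to attack, and the contrapositive argument stalls at its first move.

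Bridging exactly this gap is where the paper's proof does its real work: after the one-line acyclicity observation, it \emph{constructs} the required local orders by a backward recursion over the acyclic $\mathcal{S}_{ext}$ --- each node in an optimal state receives state value $0$, and once every successor of a state $s$ has been valued, each node in $s$ receives the value $(\max\{\textsc{Rank}(s') \mid \langle s,s'\rangle \in E(\mathcal{S}_{ext})\})/n + 1$; acyclicity guarantees this recursion terminates, and the resulting state values define the partial orders exhibiting $\mathcal{S}_{ext}$ as a \ldag. Your proposal has no counterpart to this construction, so even granting Steps 2--3 in full, what you have proved is ``convergence in asynchrony $\Rightarrow$ $\mathcal{S}_{ext}$ is acyclic,'' which is strictly weaker than the stated claim that $\mathcal{S}_{ext}$ forms a \ldag. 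To repair the proof you would need to add, as the paper does, an argument that every acyclic extended transition system admits local partial orders generating it.
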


\begin{proof}

Since $A$ guarantees to terminate under asynchrony and the extended state space $S_{ext}$ captures the effect of asynchrony (where a node may read old values of the variables of other nodes), there will be no cycles present among the global states in $\mathcal{S}_{ext}$, i.e., $\mathcal{S}_{ext}$ forms a \dag. Next, we transform $\mathcal{S}_{ext}$ to a \ldag.
    
    
    Let $S^{ext}_o$ be the set of optimal states.
    For each optimal global state $o\in S^{ext}_o$, for each node $i$ in $o$, assign 
    the state value of $i$ to be 0. Thus, the rank of $o$ is 0 (sum of state values of all nodes).
    For every non-optimal global state $s$, we assign 
    the state value of every node $i$ to be $\top$.
    Subsequently, if all successors of state $s$ have a non-null (non-$\top$) rank (i.e., $\forall s': \langle s, s'\rangle \in E(\mathcal{S}_{ext}) \Rightarrow (\forall i: \textsc{State-Value}(s'[i]) \neq \top$)) then we set the state value of each node $i$ in $s$ to be $(\max\{\textsc{Rank}(s') | \langle s, s'\rangle \in E(\mathcal{S}_{ext})\})/n+1 $.
    We do this recursively until no more updates happen to any state value of any global state in $\mathcal{S}_{ext}$.
    
    Since there are no cycles, this procedure will terminate in finite time.
    %
    %
    Observe that using the above procedure, we obtain a valid $\prec$-\dag from $\mathcal{S}_{ext}$: in this $\prec$-\dag, for all reachable global states $s'$ and $s''$, $s'\prec s''$ (i.e., $\langle s',s''\rangle$ is in $E(\mathcal{S}_{ext})$) iff $s'[i]\preceq s''[i]$, $\forall i:[1:n]$.
    %
\end{proof}

If $\mathcal{S}_{ext}$ forms a \ldag then so does $\mathcal{S}$, Hence, from \Cref{theorem:sufficiency} and \Cref{theorem:necessity}, we have

\begin{corollary}\label{corollary:dag<->partial-order}
    Let $P$ be a problem that requires an algorithm to converge to a state where $\mathcal{P}$ is true. 

    \noindent An Algorithm $A$ guarantees convergence in asynchrony iff it induces a \ldag among the (extended) global states.
\end{corollary}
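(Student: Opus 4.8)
The plan is to obtain \Cref{corollary:dag<->partial-order} purely by assembling the two preceding theorems, since each supplies one direction of the biconditional; no new combinatorial argument is needed, only careful bookkeeping of which transition system each theorem speaks about. I would fix a problem $P$ (requiring convergence to a state satisfying $\mathcal{P}$) and an algorithm $A$ that is correct under a central scheduler, and establish the two implications separately.

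For the direction ``convergence in asynchrony $\Rightarrow$ \ldag'', I would invoke \Cref{theorem:necessity} verbatim: its conclusion is exactly that $\mathcal{S}_{ext}$, the extended transition system, forms a \ldag. This is precisely why the corollary phrases the structural condition in terms of the \emph{extended} global states rather than $\mathcal{S}$ itself.

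For the converse, ``\ldag $\Rightarrow$ convergence in asynchrony'', I would first apply the bridging observation stated immediately before the corollary: whenever $\mathcal{S}_{ext}$ is a \ldag, the collapsed system $\mathcal{S}$, obtained by discarding the auxiliary information each node holds about the others and identifying extended states that carry the same underlying local-state vector, is a \ldag as well. Here I would note that the partial order produced in \Cref{theorem:necessity} orders states by the componentwise relation $s'[i] \preceq s''[i]$, which depends only on the underlying local states; hence that order is constant on the fibers of the quotient map $\mathcal{S}_{ext}\to\mathcal{S}$, descends to a well-defined partial order on $\mathcal{S}$, and cannot introduce a cycle under the merge. With $\mathcal{S}$ thus established as a \ldag, \Cref{theorem:sufficiency} applies directly and yields convergence in asynchrony, using the standing hypothesis that $A$ is correct under the central scheduler.

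The main obstacle is the reconciliation between $\mathcal{S}$ and $\mathcal{S}_{ext}$: the two theorems are not stated over the same object, so the corollary is literally true only once one verifies that the \ldag structure descends along the quotient $\mathcal{S}_{ext}\to\mathcal{S}$, and identifies $\mathcal{S}$ as the correct target for \Cref{theorem:sufficiency}. I would therefore be explicit that the phrase ``$A$ induces a \ldag among the (extended) global states'' is to be read as the assertion that $\mathcal{S}_{ext}$ is a \ldag, and that the bridging observation is exactly what permits \Cref{theorem:sufficiency}, which is phrased for $\mathcal{S}$, to discharge the backward implication. Making that correspondence precise, rather than the routine invocation of the two theorems, is the only place where care is genuinely required.
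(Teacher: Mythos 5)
Your proposal is correct and follows essentially the same route as the paper: the paper's own proof is precisely the assembly you describe, invoking \Cref{theorem:necessity} for the forward direction and \Cref{theorem:sufficiency} for the converse, linked by the observation that if $\mathcal{S}_{ext}$ forms a \ldag then so does $\mathcal{S}$. The only difference is that you supply an explicit descent argument for that bridging observation, which the paper states without justification.
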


The above corollary shows that a $\prec$-DAG is a necessary and sufficient condition for a parallel processing system to guarantee convergence in asynchrony.


\subsection{Time Complexity Properties of an algorithm traversing a \ldag}\label{subsection:dag-tc}

\begin{theorem}\label{theorem:general-convergence-time}
    Given a system of $n$ processes, with the domain of size not more than $m$ for each process, the acting algorithm will converge in $n\times (m-1)$ moves.
\end{theorem}

\begin{proof}
    Assume for contradiction that the underlying algorithm converges in $x\geq n\times (m-1)+1$ moves. This implies, by pigeonhole principle, that at least one of the nodes $i$ is revisiting their state $st$ after changing to $st'$. If $st$ to $st'$ is a step ahead transition for $i$, then $st'$ to $st$ is a step back transition for $i$ and vice versa. For a system where the global states form a $\prec$-\dag, we obtain a contradiction since step-back actions are absent in such systems.
\end{proof}

\begin{corollary}\label{corollary:convergence-time-dag-multivariable}
    Let that each node $i$ stores atmost $r$ variables,  $i[var_1],...,i[var_r]$ (with domain sizes $z_1,...z_r$ respectively) contribute independently to the formation of the lattice. 
    Then an algorithm traversing a $\prec$-\dag will converge in $n\times ((\prod_{j=1}^r z_j)-1)$ moves.
\end{corollary}


\begin{corollary}(From \Cref{theorem:dc-dip} and \Cref{corollary:convergence-time-dag-multivariable})
    \Cref{algorithm:dc-dip} converges in $\sum\limits_{i\in V(G)}deg(i)=2m$ moves. In terms of rounds, it converges in $\Delta$ rounds, where $\Delta$ is the maximum degree of the input graph.
\end{corollary}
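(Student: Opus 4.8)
The plan is to bound the number of moves made by each individual node $i$ by $deg(i)$, and then sum over all nodes using the handshake lemma. First I would recall from the proof of \Cref{theorem:dc-dip} that every move \Cref{algorithm:dc-dip} makes at $i$ strictly decreases $\textsc{State-Value-DC}(i,s)$ by at least $1$, and that (when $i[cliq]$ is a clique) this quantity lies in $\{0,1,\dots,deg(i)\}$: the largest clique containing $i$ has at most $deg(i)+1$ vertices, so $|C|-|i[cliq]|\le deg(i)$, and the value is $0$ exactly when $i$ stores a maximal clique, i.e.\ when $i$ is not \imped. Hence the partial order induced on the local states visited by $i$ has at most $deg(i)+1$ levels, and since a $\prec$-\dag admits no step-back moves (the counting underlying \Cref{corollary:convergence-time-dag-multivariable}), each move advances $i$ by at least one level, so $i$ makes at most $deg(i)$ moves.

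Summing this per-node bound over all nodes yields a total of at most $\sum_{i\in V(G)} deg(i)$ moves, which equals $2m$ by the handshake lemma $\sum_{i} deg(i)=2|E(G)|$. This establishes the move bound asserted in the statement.

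For the round bound, I would argue that in each round every \imped node is guaranteed at least one move, and each such move decreases its state value by at least $1$. The crucial feature here, already noted after \Cref{theorem:dc-dip}, is that the guard and action of \Cref{algorithm:dc-dip} at $i$ depend only on the fixed graph $G$ and on $i$'s own variable $i[cliq]$, and never on the dynamically changing variables of other nodes; thus $i$ can drive its own state value down independently of the progress of its neighbours. Since $\textsc{State-Value-DC}(i,s)\le deg(i)\le \Delta$, after at most $\Delta$ rounds the state value of every node reaches $0$, so no node is \imped and the system has converged.

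The step I expect to require the most care is pinning down the exact per-node level count, specifically handling the reset action (which fires when $i[cliq]$ is not a clique) so that it does not inflate the bound to $deg(i)+1$. I would address this by using the initialization $i[cliq]=\{i\}$ from the problem statement, so that $i[cliq]$ remains a clique throughout the computation and $\textsc{State-Value-DC}(i,\cdot)\le deg(i)$ is preserved. The remaining ingredients, namely the monotone decrease of the state value (from \Cref{theorem:dc-dip}) and the handshake summation, are routine.
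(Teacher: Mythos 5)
Your counting argument is essentially the one the paper intends: the corollary is presented as a direct consequence of \Cref{theorem:dc-dip} and \Cref{corollary:convergence-time-dag-multivariable}, i.e., every move of node $i$ strictly decreases $\textsc{State-Value-DC}(i,\cdot)$, no level is ever revisited (the no-step-back/pigeonhole argument behind \Cref{theorem:general-convergence-time}), the clique-states of $i$ occupy at most $deg(i)+1$ levels, and summing $\sum_{i}deg(i)=2m$ by the handshake lemma. Your round argument is also sound, and for the right reason: the guard and both actions of \Cref{algorithm:dc-dip} depend only on $i[cliq]$ and the fixed adjacency structure, never on other nodes' variables, so in every round each \imped node lowers its own state value by at least one regardless of what its neighbours do; this is the same locality the paper invokes when it claims AA-tolerance of this algorithm.

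The questionable step is exactly the one you flagged, and your resolution of it is not supported by the paper. The DC problem statement does \emph{not} initialize $i[cliq]$ to $\{i\}$; it only constrains the domain, $\{i\}\subseteq i[cliq]\subseteq Adj_i\cup\{i\}$, and \Cref{theorem:dc-dip} asserts that \Cref{algorithm:dc-dip} is \emph{self-stabilizing}, so the bound is supposed to hold from an arbitrary initial state. From an arbitrary state, $i[cliq]$ need not be a clique, node $i$ starts at state value $deg(i)+1$, and it may execute one reset move before its at most $deg(i)$ addition moves; the worst case is then $\sum_{i}(deg(i)+1)=2m+n$ moves and $\Delta+1$ rounds, not $2m$ and $\Delta$. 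So as written, your proof establishes the stated bounds only under an added initialization hypothesis, which you should state as such rather than attribute to the problem definition. (To be fair, this discrepancy is inherited from the paper itself: its $2m$/$\Delta$ claim tacitly counts only the clique levels, i.e., it measures convergence after the one-time resets, as its own Figure 2 caption suggests. Your explicit assumption is one legitimate way to make the statement true, but the misattribution hides that the unconditional self-stabilizing bound is $2m+n$ moves.)
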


\begin{corollary}(From \Cref{theorem:sp-dip} and \Cref{corollary:convergence-time-dag-multivariable})
    \Cref{algorithm:sp-dip} converges converges in $\mathcal{D}$ rounds, where $\mathcal{D}$ is the diameter of the input graph.
\end{corollary}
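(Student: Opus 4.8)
The statement to prove is the final Corollary: that Algorithm~\ref{algorithm:sp-dip} converges in $\mathcal{D}$ rounds, where $\mathcal{D}$ is the diameter of the input graph.

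\medskip

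The plan is to bound the number of rounds by tracking how information about the destination $v_{des}$ propagates outward through the graph, round by round. First I would establish the base case: in the initial state every node has $i[d]=\infty$ and $i[p]=\top$. The destination $v_{des}$ is \imped (since $v_{des}[d]\neq 0$), so within the first round $v_{des}$ executes its first rule and sets $v_{des}[d]=0$, $v_{des}[p]=v_{des}$, fixing its correct distance permanently. This is the seed from which correct distances spread.

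\medskip

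Next I would argue by induction on the distance $dis(i,v_{des})$ from the destination. The key claim is: \emph{after $k$ rounds, every node $i$ with $dis(i,v_{des})\le k$ has computed its correct value $i[d]=dis(i,v_{des})$, and thereafter never changes} (by \Cref{theorem:sp-dip} the algorithm traverses a \dag, so once a node attains its optimal state value of $0$ it is no longer \imped and stays put). For the inductive step, consider a node $i$ at distance $k+1$. It has at least one neighbour $j$ lying on a shortest path to $v_{des}$ with $dis(j,v_{des})=k$ and $dis(i,v_{des})=dis(j,v_{des})+w(i,j)$. By the inductive hypothesis $j$ has already settled to its correct value by the end of round $k$; hence during round $k+1$ node $i$, when it evaluates its guard, will find $i[d]>j[d]+w(i,j)$ (so it is \imped) and its update rule sets $i[d]$ to the minimum over neighbours, which equals $dis(i,v_{des})$. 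Since the longest shortest-path distance in a connected graph is at most $\mathcal{D}$, after $\mathcal{D}$ rounds all nodes have settled, i.e.\ $\forall i\,\lnot\textsc{\Imped-SP}(i)$, and the system has converged. One can also invoke \Cref{corollary:convergence-time-dag-multivariable} as an alternative move-count bound, but the round bound $\mathcal{D}$ comes specifically from this layered propagation argument.

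\medskip

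The main obstacle is handling the interaction between the round-based propagation argument and the asynchronous reading of old values, which is the whole point of the paper. In the AA model a node $i$ might, during round $k+1$, read a stale value of its shortest-path neighbour $j$. I would address this by appealing to the \dag structure already proved: because state values only decrease (step-back actions are absent), any stale value $j[d]$ that $i$ reads is an \emph{over}-estimate of $dis(j,v_{des})$, never an under-estimate, so $i$ can never prematurely lock in a distance smaller than its true shortest distance. The subtle point to nail down is that the ``round'' abstraction (every node evaluates its guard at least once) still forces progress at the correct layer even when reads are stale: once neighbour $j$ has \emph{sent} its settled value, the round definition and the eventual-delivery assumption of the AA model guarantee $i$ sees it within the accounting of a round. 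Making precise that the propagation-by-distance argument survives staleness, rather than simply quoting the synchronous BFS-layering intuition, is where the care is needed.
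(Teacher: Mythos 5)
The paper itself supplies no argument for this corollary (it is asserted to follow from \Cref{theorem:sp-dip} and the generic move bound of \Cref{corollary:convergence-time-dag-multivariable}), so a layered-propagation induction of the kind you give is the right way to actually justify a \emph{round} bound. However, your induction has a genuine gap: you induct on the weighted distance $dis(i,v_{des})$, and your inductive step assumes that a node at distance $k+1$ has a shortest-path neighbour $j$ with $dis(j,v_{des})=k$. That is true only when every edge weight equals $1$. The problem in \Cref{subsection:sp} is on a positively \emph{weighted} graph, and what advances by one per round is the number of edges (hops) along a shortest path, not the weighted distance. The induction must therefore read: after $k$ rounds, every node whose shortest weighted path to $v_{des}$ uses at most $k$ edges has settled. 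The bound this yields is the maximum hop count of a shortest path, which need not equal $\mathcal{D}$ under either reading of ``diameter'': in a triangle with $w(a,v_{des})=10$, $w(a,b)=w(b,v_{des})=1$, the hop diameter is $1$ yet $a$ may need two rounds (it cannot settle before $b$ does); on a path of $n$ nodes with all weights $0.1$, the weighted diameter is below $1$ yet $n-1$ rounds are needed. So your argument (and, in fairness, the corollary as stated) is sound only for unit weights, or after replacing $\mathcal{D}$ by the maximum number of edges on a shortest path.

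The second weak point is your treatment of staleness. You assert that the round definition plus eventual delivery in the AA model guarantee that a settled neighbour's value is seen ``within the accounting of a round,'' but nothing in the paper supports this: a round only requires that every node evaluates its guards at least once, and AA only guarantees that sent information eventually arrives, so in round $k+1$ a node may still be acting on values that predate round $k$. Your monotonicity observation (stale $j[d]$ values are overestimates, so no value ever drops below the true distance) is correct and gives safety, but it does not give the per-round progress your induction needs. To make the round bound go through you must either assume reads are fresh at guard evaluation (e.g., execution under a central/distributed scheduler, which is the setting in which \Cref{theorem:sp-dip} is invoked) or add an explicit bound on staleness per round; as written, the step from eventual delivery to one-layer-per-round progress is asserted, not proved.
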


\begin{corollary}(From \Cref{theorem:mm-dia} and \Cref{corollary:convergence-time-dag-multivariable})
    \Cref{algorithm:mm-dia} converges in $2n$ moves.
\end{corollary}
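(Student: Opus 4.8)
The plan is to combine the $\prec$-\dag structure established in \Cref{theorem:mm-dia} and \Cref{lemma:mm-dag-structure} with the counting argument behind \Cref{corollary:convergence-time-dag-multivariable}. Recall that for \mm the local state of each node is the single variable $i[match]$, and that its progress is captured by $\textsc{State-Value-MM}(i,s)\in\{0,1,2,3\}$. From \Cref{theorem:mm-dia} we already know that each move strictly decreases the moving node's own state value and never increases any node's state value; hence the own-moves of each node form a strictly decreasing chain of state values, and the total move count is at most $n$ times the length (in edges) of the longest such chain. The whole task therefore reduces to bounding this chain length by $2$. Note that the state value ranges over four levels $3>2>1>0$, so \Cref{theorem:general-convergence-time} applied bluntly would only give $3n$; the improvement to $2n$ comes from showing that no single node traverses all four levels through its \emph{own} actions.

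The key step is a case analysis on the three branches of \Cref{algorithm:mm-dia}. I would argue that a node's own move does exactly one of the following: (i) from $\textsc{Wrongly-Matched}(i)$ (value $3$) it resets $i[match]$ to $\top$, landing at value $2$, $1$, or $0$; (ii) from $\textsc{Matchable}(i)$ (value $2$) it points to an unmatched neighbour and becomes matched, landing at value $0$; or (iii) from $\textsc{I-Pointed}(i)$ (value $1$) it points back and becomes matched, landing at value $0$. The crucial observation is that a node is never driven into the i-pointed level $1$ by its \emph{own} move, since it enters level $1$ only when a neighbour points at it; and after branch (ii) or (iii) it is correctly matched at value $0$, which is terminal for that node in the \dag. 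Thus every node makes at most one move out of level $3$ followed by at most one further move reaching level $0$, while the own-move chain $3\to2\to1\to0$ is impossible. I would also verify that a node cannot re-enter level $3$: after pointing in branches (ii) and (iii) the chosen partner is unmatched or is pointing back, so the node is correctly matched and not wrongly matched, and the \dag property then forbids its state value from rising again.

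Granting these observations, each node contributes at most $2$ moves, and summing over the $n$ nodes yields the claimed bound of $2n$. The main obstacle I anticipate is precisely ruling out a third own-move, i.e.\ establishing rigorously that level $1$ is entered only through a neighbour's action (so that $\textsc{Matchable}(i)\to\textsc{I-Pointed}(i)$ is never a self-transition) and that once a node reaches value $0$ it remains disabled. Both of these follow from a careful reading of the guards together with the monotonicity guaranteed by \Cref{theorem:mm-dia}, and once they are in place the corollary is immediate.
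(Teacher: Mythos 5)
Your proposal is correct and follows the same basic strategy as the paper's proof: use the four-level \textsc{State-Value-MM} structure, show that each node accounts for at most two moves, and sum over the $n$ nodes. The accounting differs in a way worth noting, though. The paper bounds, per node, the number of \emph{state-value transitions}, asserting that each node goes ``$3\to2$ or $3\to1$, and then $2\to0$ or $1\to0$,'' where a transition may be caused by the node's own move or a neighbour's move. Taken literally this list is incomplete: a matchable node can be driven $2\to1$ by a neighbour pointing at it, and in a self-stabilizing run an unmatched node at value $0$ can be pushed back up to $2$ when a wrongly-matched neighbour resets, so a node may undergo three value transitions. Your accounting --- counting only \emph{own} moves, and observing that level $1$ (\textsc{I-Pointed}) is never entered by a node's own action, so the own-move chain $3\to2\to1\to0$ is impossible --- is immune to both of these cases, since neighbour-caused transitions are charged to the neighbour's move; it therefore yields the $2n$ bound a bit more robustly than the paper's own wording. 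Be aware, however, that both your argument and the paper's rest on the same monotonicity claim, which you correctly identify as the main obstacle: that a node, once it points at a partner, never becomes wrongly matched again, i.e., two nodes never race to point at the same unmatched node. This is exactly what the \textsc{Else-Pointed} guard together with the AMR model is meant to ensure, and it is asserted rather than proved in \Cref{lemma:mm-dag-structure}; your deferral of this step to \Cref{theorem:mm-dia} is thus no less rigorous than the paper's own treatment, but it is the one place where neither proof is fully self-contained.
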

\begin{proof}
    This is because, as explained in \Cref{lemma:mm-dag-structure}, any node $i$ goes from state value 3 to 2 or 3 to 1, and then 2 to 0 or 1 to 0. Hence, there are atmost two transitions that $i$ goes through, with respect to its state value, which can happen due to the movement of $i$ or some node in $Adj_i$.
\end{proof}

\section{Implications of our theory}\label{section:implications}

In this paper, we showed that 
local state transitions being abstracted as a partial order
is both necessary and sufficient for an algorithm to allow asynchrony. Focusing on the \textit{necessary} condition, we find that if there is any existing technique that permits the program to be executed under asynchrony, then a $\prec$-\dag can be induced in it. 

One such instance of asynchronous execution encapsulates a special case of the read-write model \cite{Dolev1993}, where reads and writes do not depend on a scheduler or any other condition. In the \textit{read-write model}, an action is either a read action (read a variable from one other process (and save its local copy)) or a write action (write your own variable (based on those copies)). Consider the situation where the read action in such a program can be executed at any time (i.e., without depending on a scheduler or any other condition). Under such a scenario, let us focus on the execution of the write action. In the write action of some node $i$, it is relying on value of variables of some other nodes, say $j$ and $k$, that it read previously. Here, the state of $j$ and $k$ may have already changed. Furthermore the values read for $j$ and $k$ may not be consistent with each other. In other words, such a program satisfies its specification then it is doing so while permitting asynchronous execution considered in this paper. The necessity result in this paper implies that a $\prec$-\dag is induced in the state transition graph. 



A message-passing program also inherently appears to permit asynchronous execution, in the cases where the executions are independent of a scheduler or another condition. In the message-passing model, the nodes are placed remotely and so, nodes rely on messages received in the past. Consider a message passing program where whenever a node updates its state, it forwards its new state to all its neighbors. 
Additionally assume that the message channels are FIFO. In such a program, when node $i$ changes its state based on the messages received from others, say $j$ and $k$, it is doing so with old information about $j$ and $k$, as this information may have changed in the interim. In other words, such an implementation satisfies its specification then it is doing so while permitting asynchronous execution considered in this paper. The necessity result in this paper implies that \dag can be induced in this graph with the corresponding \imped condition. (Here, FIFO is required to satisfy AMR model introduced in \Cref{section:preliminaries}. 
This discussion also assumes that $i$ is learning the information about $j$ directly. If $i$ could learn the state of $j$ via a third node, say $k$, additional conditions would be required (e.g., causal message delivery) to ensure that AMR model is satisfied).

Another immediate implication of our theory is in its application in writing proofs. Developing asynchronous systems has been difficult, and one of the reasons that adds to the intricacy is writing proofs of correctness of such systems. Our theory not only insists on the possibility of simplifying such proofs, but also provides an upper bound to the time complexity of the runtime of asynchronous algorithms. \Cref{corollary:dag<->partial-order} implies that to show that an algorithm is tolerant to asynchrony, we only need to show that the local states visited by individual nodes can be abstracted as a partial order, rather than to generate the entire state space and checking for the existence of a cycle. Thus, our theory has the potential to simplify the proofs that show tolerance to asynchrony. In addition, \Cref{corollary:convergence-time-dag-multivariable} provides the upper bound of the time complexity of a \dag-inducing algorithm.

If an algorithm is \dag-inducing, its more important implication is the partial order formed between the local states visited by individual nodes, rather than the \dag formed among the global states. A \ldag that arises among the global states is an implication of the partial order formed among the local states.

In fact, in recent literature, existing algorithms have been shown to be tolerant to asynchrony by only showing that a partial (or a total) order exists among the local states visited by individual nodes. For example, Johnson's algorithm for computing shortest paths \cite{Johnson1977}, Gale-Shapey algorithm for stable marriage \cite{Gale1962}, Cesari-Maeder parallelization \cite{Cesari1996} of Karatsuba's multiplication, GSGS algorithm for converging myopic robots on an infinite triangular grid \cite{Goswami2022}, have been shown to be tolerant to asynchrony, by showing that the local state transitions can be abstracted as a total order, in respectively, \cite{Garg2020}, \cite{Garg2020}, \cite{Gupta2023} and \cite{Gupta2023a}. This is a consequence of the observation that these algorithms stipulate that all \imped nodes must update their local states, and for each \imped node, there is only one choice of action. Our paper studies a general case of the conditions of lattice linearity studied in these papers, where we study algorithms in which the local states of nodes form a partial order. Interestingly, we find that any algorithm that is tolerant to asynchrony satisfies this condition.

\section{Related Work}\label{section:literature}

\textbf{Lattices and DAGs among global states}: In \cite{Garg2020}, the authors have described problems which possess a predicate under which the states naturally form a lattice. Problems like the stable marriage problem, job scheduling, market clearing price and others are studied in \cite{Garg2020}. These problems do not allow self-stabilization. In \cite{Garg2021} and \cite{Garg2022}, the authors have studied lattice-linearity in, respectively, housing market problem and several dynamic programming problems.

In \cite{Gupta2023}, the authors have studied lattice-linear problems that allow self-stabilization.
In \cite{Gupta2021}, the authors have extended the theory in \cite{Garg2020} to develop eventually lattice-linear self-stabilizing algorithms for some non-lattice-linear problems. Such algorithms induce single or multiple disjoint lattices in a subset of the state space of the transition system. In \cite{Gupta2022}, the authors presented a fully lattice-linear algorithm for the minimal dominating set problem, where the algorithm induces single or multiple disjoint lattices among all the global states.

In this paper, we introduce DAG-inducing problems and DAG-inducing algorithms. Since a lattice is a \dag, DAG-inducing problems encapsulate lattice-linear problems and DAG-inducing algorithms encapsulate lattice-linear algorithms.
Effectively, an algorithm developed for a \dag-inducing problem (or lattice-linear problem) under the constraints of \dag-induction (or lattice-linearity) is also a \dag-inducing algorithm.


\noindent\textbf{Maximal Matching}: A distributed self-stabilizing algorithm for the maximal matching problem is presented in \cite{Hsu1992}; this algorithm converges in $O(n^3)$ moves. The algorithm in \cite{Hanckowiak2001} converges in $O(\log^4n)$ moves under a synchronous scheduler. The algoritrithm for maximal matching presented in \cite{Goddard2003} converges in $n+1$ rounds. Hedetniemi et al. (2001) \cite{Hedetniemi2001} showed that the algorithm presented in \cite{Hsu1992} converges in $2m+n$ moves.

The DAG-inducing algorithm for maximal matching, present in this paper, converges in $2n$ moves and is tolerant to asynchrony. This is an improvement to the algorithms present in the literature.

\noindent\textbf{Abstractions in Concurrent Computing}:
Since this paper focuses on asynchronous computations, we also study other abstractions in the context of concurrent systems: non-blocking (lock-free/wait-free), starvation-free and serializability. Since these models are only tangentially related to our work, we provide a comprehensive discussion of these models in the Appendix.

\section{Conclusion}\label{section:conclusion}

In this paper, we focused on the problem of finding necessary and sufficient conditions for an algorithm to execute correctly without synchronization. We observe that the existence of a \dag (where the sink states are optimal terminating states) in the state is space is necessary and sufficient for the correctness of a sequential program, a similar necessary and sufficient condition is not known for programs that can execute under asynchrony, i.e., without any synchronization.

We introduced the notion of $\prec$-\dag. Along with the notion of \imped from \cite{Garg2020}, we showed that they are necessary and sufficient for correct execution under asynchrony. 
In \dag-inducing problems, all unsatisfied nodes are enabled and can (and must) therefore evaluate their guards and take a corresponding action at any time. In a non-\dag-inducing problem, however, all unsatisfied nodes are not enabled. Only the \imped nodes are enabled; these nodes satisfy some additional constraints, in addition to being unsatisfied.

The sufficiency nature of $\prec$-\dag implies that such programs can be executed asynchronously, thereby eliminating the cost of synchronization.  This is especially important in today's multiprocessor architecture where synchronization overhead is the Achilles heel of parallel programs.

The necessity of this result means that it encompasses various existing techniques that are often used for designing programs that run under asynchrony. In other words, this means that $\prec$-\dag exists in these algorithms even if the programs were designed without any prior assumption of such a \dag. 

Also, as discussed in \Cref{section:implications}, an implication of the necessary and sufficient condition is that it has the potential to simplify writing proofs for programs under asynchrony,

We further characterized the algorithms that utilize $\prec$-\dags into \dag-inducing problems, where the \dag occurs due to the problem definition, and \dag-inducing algorithms, where such a \dag is induced algorithmically. We provide examples of \dag-inducing problems and \dag-inducing algorithms, that show that lattices are not sufficient to model the transitions of many interesting problems. In the literature, there are works on lattice-linear problems and algorithms, where  a ($\prec$-)lattice is induced under a predicate or an algorithm respectively. Since a total order is a special case of a partial order and (consequently) a lattice is a subclass of a \dag, lattice-linear problems are a proper subset of the class of \dag-inducing problems, and lattice-linear algorithms are a proper subset of the class of \dag-inducing algorithms.

\bibliography{dag.bib}\label{bibliography}
\bibliographystyle{acm}

\newpage 

\section*{Appendix: Other Related Abstractions in Concurrent Computing}

In this Appendix, we discuss some models related to synchronization that are tangentially related to the paper. 

An algorithm is \textit{non-blocking} if in a system running such algorithm, if a node fails or is suspended, then it does not result in failure or suspension of another node. 
A non-blocking algorithm is \textit{lock-free} if system-wide progress can be guaranteed. For example, algorithms for implementing lock-free singly-linked lists and binary search tree are, respectively, presented in \cite{Valois1995} and \cite{Natarajan2014}. In such systems, if a read/write request is blocked then other nodes continue their actions normally.

A non-blocking algorithm is \textit{wait-free} if progress can be guaranteed per node. A wait-free sorting algorithm is studied in \cite{Shavit1997}, which sorts an array of size $N$ using $n\leq N$ computing nodes, and an $O(n)$ time wait-free approximate agreement algorithm is presented in \cite{Attiya1994}. In such systems, in contrast to lock-free systems, it must be guaranteed that all nodes make progress individually.

A key characteristic of lattice linear algorithms is that they  permit the algorithm to execute asynchronously. And, a key difference between non-blocking and asynchronous algorithms is the \textit{system-perspective} for which they are designed.
To understand this, observe that from a perspective, the lattice-linear, asynchronous, algorithms considered in this paper are wait-free.
Each node reads the values of other nodes. Then, it executes an action, if it is enabled, without synchronization. More generally, in an asynchronous algorithm, each node reads the state of its relevant neighbours to check if the guard evaluates to true. It can, then, update its state without coordination with other nodes. 

That said, the goal of asynchronous algorithms is not the progress / blocking of individual nodes 
(e.g., success of insert request in a linked list and a binary search tree, respectively, in \cite{Valois1995} and \cite{Natarajan2014}).
Rather it focuses on the progress from the perspective of the system, i.e., the goal is not about the progress of an action by a node but rather that of the entire system.  
For example, in the algorithm for minimal dominating set present in this paper, if one of the nodes is slow or does not move, the system will not converge. 
However, the nodes can run without any coordination and they can execute on old values, instead of requiring a synchronization primitive to ensure convergence.
In fact, the notion of \imped (recall that in the algorithms that we present in this paper, in any global state, all enabled nodes are \imped) captures this. An \imped node has to make progress in order for the system to make progress.

\textit{Starvation} happens when requests of a higher priority prevent a request of lower priority from entering the critical section indefinitely. To prevent starvation, algorithms are designed such that the priority of pending requests are increased dynamically. Consequently, a low-priority request eventually obtains the highest priority. Such algorithms are called \textit{starvation-free} algorithms. For example, authors of \cite{Kim2005} and \cite{Attiya2010}, respectively, present a starvation-free algorithm to schedule queued traffic in a network switch and a starvation-free distributed directory algorithm for shared objects.
Asynchronous algorithms are starvation-free, as long as all enabled processes can execute. If all enabled processes can execute, convergence is guaranteed.

\textit{Serializability} allows only those executions to be executed concurrently which can be modelled as some permutation of a sequence of those executions. In other words, serializability does not allow nodes to read and execute on old information of each other: only those executions are allowed in concurrency such that reading fresh information, as if the nodes were executing in an interleaving fashion, would give the same result. Serializability is heavily utilized in database systems, and thus, the executions performed in such systems are called \textit{transactions}. Authors of \cite{Papadimitriou1979} show that corresponding to several transactions, determining whether a sequence of read and write operations is serializable is an NP-Complete problem. They also present some polynomial time algorithms that approximate such serializability. Authors of \cite{Fle1982} consider the problem in which the sequence of operations performed by a transaction may be repeated infinitely often. They describe a synchronization algorithm allowing only those schedules that are serializable in the order of commitment.

The asynchronous execution considered in this paper is not \textit{serializable}, especially, since the reads can be from an old global state. Even so, the algorithm converges, and does not suffer from the overhead of synchronization required for serializability.

In \textit{redblue} systems (e.g., \cite{Li2012}), the rules can be divided into two non-empty sets: red rules, which must be synchronized, and blue rules, which can run in a lazy manner and do not have to be synchronized. Lattice-linear and asynchronous systems in general are the systems in which red rules are absent as an enabled node can execute independently without any synchronization.

In \textit{local mutual exclusion}, at a given time, some nodes block other nodes while entering to critical section. This can be done, e.g., by deploying semaphores.
Authors of \cite{Raymond1989} presented an algorithm for distributed mutual exclusion in computer networks, that uses a spanning tree of the subject network. In this algorithm, the number of messages exchanged per critical section depends on the topology of this tree, typically this value is $O(n)$.
Authors of \cite{Yang1995} an algorithm with $O(\lg n)$ time complexity for mutual exclusion among $n$ nodes. Specifically, this algorithm requires atomic reads and writes and in which all spins are local (here a spin means a busy wait in which a node, in this case, waits on locally accessible shared variables).

We see, in algorithms based on local mutual exclusion, that they require additional data structures/variables to ensure that access is provided to (and blocking is deployed on) a certain set of processes. In asynchronous algorithms, nodes do not block each other. In non-lattice-linear problems, we see that usually a tie-breaker is required to ensure the correctness of the executions, however, if a problem is naturally lattice-linear, then it is not required. This is because in the case of non-lattice-linear problems, it may be desired that all unsatisfied nodes do not become enabled, however, in the case of lattice-linear problems, as we see in \cite{Garg2020}, all unsatisfied nodes can be enabled. And, all enabled nodes can read values and perform executions asynchronously, where they are allowed to read old values, which is not allowed in algorithms that deploy mutual exclusion.

\end{document}